\documentclass[a4paper,UKenglish,cleveref, autoref, thm-restate]{lipics-v2021}
%This is a template for producing LIPIcs articles. 
%See  for further information.
%for A4 paper format use option "a4paper", for US-letter use option "letterpaper"
%for british hyphenation rules use option "UKenglish", for american hyphenation rules use option "USenglish"
%for section-numbered lemmas etc., use "numberwithinsect"
%for enabling cleveref support, use "cleveref"
%for enabling autoref support, use "autoref"
%for anonymousing the authors (e.g. for double-blind review), add "anonymous"
%for enabling thm-restate support, use "thm-restate"
%for enabling a two-column layout for the author/affilation part (only applicable for > 6 authors), use "authorcolumns"
%for producing a PDF according the PDF/A standard, add "pdfa"

%\graphicspath{{./graphics/}}%helpful if your graphic files are in another directory

%%% Additional Packages
% \usepackage{bm}

%%% Convenience Macros
% Construct convenience macros.

\newcommand{\itnot}{\textit{not }}
\newcommand{\grp}[1]{\left(#1\right)}
\newcommand{\set}[1]{\left\{#1\right\}}
\newcommand{\abs}[1]{\left|#1\right|}
\newcommand{\sub}{\subseteq}

\DeclareMathOperator{\OPT}{OPT}

\newcommand{\eps}{\epsilon}

%%% LCS Specific Macros
\DeclareMathOperator{\lcs}{LCS}
\DeclareMathOperator{\ed}{ED}
\DeclareMathOperator{\bm}{BestMatch}
\DeclareMathOperator{\match}{Match}
\DeclareMathOperator{\greed}{Greedy}
\DeclareMathOperator{\approxed}{ApproxED}

%%% Subroutines

\bibliographystyle{plainurl}% the mandatory bibstyle

\title{Improved Approximation for Longest Common Subsequence over Small Alphabets}

% \titlerunning{Improved Approximation for Longest Common Subsequence over Small Alphabets} %TODO optional, please use if title is longer than one line

%\author{Shyan Akmal}{MIT EECS and CSAIL, USA }{naysh@mit.edu}{https://orcid.org/0000-0002-7266-2041}

% \author{Shyan Akmal}{MIT [optional: Address], Country \and My second affiliation, Country \and \url{http://www.myhomepage.edu} }{naysh@mit.edu}{https://orcid.org/0000-0002-7266-2041}{(Optional) author-specific funding acknowledgements}%TODO mandatory, please use full name; only 1 author per \author macro; first two parameters are mandatory, other parameters can be empty. Please provide at least the name of the affiliation and the country. The full address is optional

%\author{Virginia Williams}{MIT EECS and CSAIL, USA}{virgi@mit.edu}

\author{Shyan Akmal}{MIT EECS and CSAIL, USA  }{naysh@mit.edu}{https://orcid.org/0000-0002-7266-2041}{Supported by NSF Grant CCF-1909429.}%TODO mandatory, please use full name; only 1 author per \author macro; first two parameters are mandatory, other parameters can be empty. Please provide at least the name of the affiliation and the country. The full address is optional

\author{Virginia Vassilevska Williams}{MIT EECS and CSAIL, USA}{virgi@mit.edu}{}{Supported by an NSF CAREER Award, NSF Grant CCF-1909429, a BSF Grant BSF:2012338, a Google Research Fellowship and a Sloan Research Fellowship.}

\authorrunning{S.\,S. Akmal and V.\,V. Williams} 

\Copyright{Shyan S. Akmal and Virginia Vassilevska Williams}

\begin{CCSXML}
<ccs2012>
  <concept>
      <concept_id>10003752.10003809</concept_id>
      <concept_desc>Theory of computation~Design and analysis of algorithms</concept_desc>
      <concept_significance>500</concept_significance>
      </concept>
 </ccs2012>
\end{CCSXML}

\ccsdesc[500]{Theory of computation~Design and analysis of algorithms}
%TODO mandatory: Please choose ACM 2012 classifications from https://dl.acm.org/ccs/ccs_flat.cfm 

\keywords{approximation algorithms, longest common subsequence, subquadratic} %TODO mandatory; please add comma-separated list of keywords

\category{Track A: Algorithms, Complexity and Games} %optional, e.g. invited paper

\relatedversion{} %optional, e.g. full version hosted on arXiv, HAL, or other respository/website
%\relatedversiondetails[linktext={opt. text shown instead of the URL}, cite=DBLP:books/mk/GrayR93]{Classification (e.g. Full Version, Extended Version, Previous Version}{URL to related version} %linktext and cite are optional

%\supplement{}%optional, e.g. related research data, source code, ... hosted on a repository like zenodo, figshare, GitHub, ...
%\supplementdetails[linktext={opt. text shown instead of the URL}, cite=DBLP:books/mk/GrayR93, subcategory={Description, Subcategory}, swhid={Software Heritage Identifier}]{General Classification (e.g. Software, Dataset, Model, ...)}{URL to related version} %linktext, cite, and subcategory are optional

%\funding{(Optional) general funding statement \dots}%optional, to capture a funding statement, which applies to all authors. Please enter author specific funding statements as fifth argument of the \author macro.

\acknowledgements{We thank Jenny Kaufmann for suggesting several helpful revisions.}%optional

\nolinenumbers %uncomment to disable line numbering

%\hideLIPIcs  %uncomment to remove references to LIPIcs series (logo, DOI, ...), e.g. when preparing a pre-final version to be uploaded to arXiv or another public repository

%Editor-only macros:: begin (do not touch as author)%%%%%%%%%%%%%%%%%%%%%%%%%%%%%%%%%%
\EventEditors{John Q. Open and Joan R. Access}
\EventNoEds{2}
\EventLongTitle{42nd Conference on Very Important Topics (CVIT 2016)}
\EventShortTitle{CVIT 2016}
\EventAcronym{CVIT}
\EventYear{2016}
\EventDate{December 24--27, 2016}
\EventLocation{Little Whinging, United Kingdom}
\EventLogo{}
\SeriesVolume{42}
\ArticleNo{23}
%%%%%%%%%%%%%%%%%%%%%%%%%%%%%%%%%%%%%%%%%%%%%%%%%%%%%%

\begin{document}

\maketitle

\begin{abstract}
This paper investigates the approximability of the Longest Common Subsequence (LCS) problem. 
The fastest algorithm for solving the LCS problem exactly runs in essentially quadratic time in the length of the input, and it is known that under the Strong Exponential Time Hypothesis the quadratic running time cannot be beaten. 
There are no such limitations for the approximate computation of the LCS however, except in some limited scenarios. 
There is also a scarcity of approximation algorithms.
When the two given strings are over an alphabet of size $k$, returning the subsequence formed by the most frequent symbol occurring in both strings achieves a $1/k$ approximation for the LCS. It is an open problem whether a better than $1/k$ approximation can be achieved in truly subquadratic time ($O(n^{2-\delta})$ time for constant $\delta>0$). 

A recent result [Rubinstein and Song SODA'2020] showed that a $1/2+\epsilon$ approximation for the LCS over a binary alphabet is possible in truly subquadratic time, provided the input strings have the same length. 
In this paper we show that if a $1/2+\epsilon$ approximation (for $\epsilon>0$) is achievable for binary LCS in truly subquadratic time when the input strings can be unequal, then for every constant $k$, there is a truly subquadratic time algorithm that achieves a $1/k+\delta$ approximation for $k$-ary alphabet LCS for some $\delta>0$. 
Thus the binary case is the hardest. 
We also show that for every constant $k$, if one is given two strings of \emph{equal} length over a $k$-ary alphabet, one can obtain a $1/k+\epsilon$ approximation for some constant $\epsilon>0$ in truly subquadratic time, thus extending the Rubinstein and Song result to all alphabets of constant size.
\end{abstract}

\section{Introduction}

In a large variety of applications, from spell-checkers to DNA sequence alignment, one seeks to compute how similar two given sequences of letters are. Arguably the most popular measures of sequence similarity are the 
 \emph{longest common subsequence (LCS)} and the  \emph{edit distance}. 
The LCS of two given sequences $A$ and $B$ 
 (as the name suggests) measures the maximum length of a sequence whose symbols appear in both $A$ and $B$ in the same order.
The edit distance, on the other hand, measures how far apart two strings are by counting the minimum number of insertions, deletions and substitutions of characters that must be performed on one string to transform it into the other. These two measures are related: the complement of the LCS is the version of edit distance in which one minimizes only the number of insertions and deletions (in fact, both versions of edit distance are the same up to a factor of $2$, so with respect to constant factor approximation algorithms they are equivalent).
Both the LCS and the edit distance of two length $n$ strings can be computed in $O(n^2)$ time using a classic dynamic programming approach. 
The fastest algorithm for both problems is the $O(n^2/(\log n)^2)$ time algorithm of Masek and Paterson \cite{MasekP80}.

Hardness results from fine-grained complexity have shown that truly subquadratic time algorithms (those running in time $O(n^{2-\delta})$ for some constant $\delta>0$) for LCS and edit distance cannot exist under the Strong Exponential Time Hypothesis \cite{AbboudBW15, BackursI18, Bringmann2018} and other even more believable hypotheses \cite{logshaving-hard,ChenGLRR19}. 
Consequently, much of the recent interest around LCS and edit distance has concerned approximation algorithms for the problems.
A long chain of progress on edit distance approximation (e.g. \cite{lcs-root-approx,BatuES06,lcs-root-approx,Bar-YossefJKK04,AndoniO12,AndoniKO10,BoroujeniEGHS18}) has culminated in the breakthrough constant-factor approximation algorithm of \cite{ChakrabortyDGKS18} running in truly subquadratic time. Several improvements followed this breakthrough including \cite{simpler-ED, kouckysaks20, brakensiekrubin20}, with the most recent result being a constant factor approximation algorithm running in near linear time \cite{andoni-nearlinear-constant}.

In contrast,
much less is known about how well LCS can be approximated in truly subquadratic time. For inputs over non-constant size alphabets, some fine-grained hardness results \cite{AbboudB17,LCS-circuitlbs,ChenGLRR19} and nontrivial super-constant approximations \cite{linear-LCS} are known.
When the input strings come from a fixed alphabet of constant size $k$, there is a trivial $1/k$-approximation algorithm that returns in linear time the longest common \emph{unary} subsequence of two inputs.
Despite the simplicity of this algorithm, until recently no better constant-factor approximation algorithm was known for \emph{any} constant size alphabet. 
There are also no existing hardness results that rule out better approximation algorithms.

Recently, for the case of {\em binary} strings  of {\em equal} length, Rubinstein and Song \cite{approxED-LCS} were able to improve upon the simple $1/2$-approximation algorithm described above, obtaining for some constants $\epsilon,\delta>0$ an $O(n^{2-\delta})$ time algorithm that returns a $(1/2+\epsilon)$-approximation. 
It is still open, however, whether one can obtain such an algorithm for binary strings of unequal length, and whether one can extend 
the result to achieve a truly subquadratic time algorithm achieving a better than $k$-approximation for strings over an alphabet of size $k$, for every constant $k$.

\paragraph*{Our results.}
In this paper we present two results. 
The first result shows that if one can obtain a truly subquadratic time, better than $2$-approximation algorithm for the LCS of binary strings of possibly unequal lengths, then one can use this algorithm to obtain a truly subquadratic time, better than $k$-approximation algorithm for the LCS of strings over a $k$-ary alphabet, for any constant $k$.

\begin{theorem}\label{reduction-intro}
For any fixed integer $k\geq 2$ there is an $O(n)$ time algorithm that given an instance of LCS for strings of length at most $n$ over an alphabet of size $k$, reduces it to $O(k^2)$ instances of LCS over {\em binary} strings of length at most $n$, so that $(1/2+\epsilon)$-approximate solutions (for $\epsilon>0$) for these LCS instances can be translated in $O(n)$ time into a $(1/k+\epsilon_k)$ approximation of the $k$-ary alphabet LCS instance, where $\epsilon_k>0$ is a constant only depending on $\epsilon$ and $k$.
\end{theorem}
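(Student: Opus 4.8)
The plan is to reduce a $k$-ary LCS instance to binary instances by "isolating" pairs of symbols. Let $A, B$ be the input strings over alphabet $\Sigma = \{1, \dots, k\}$, and let $L = \lcs(A,B)$. Fix an optimal common subsequence $S$ of length $L$. For each symbol $a \in \Sigma$, let $n_a$ denote the number of occurrences of $a$ in $S$; so $\sum_a n_a = L$. The simple $1/k$-approximation takes the most frequent symbol in $S$, which has $n_a \geq L/k$ occurrences — call this symbol the \emph{majority symbol} $m$. The key idea is that if this trivial bound is nearly tight, i.e. $n_m$ is only slightly more than $L/k$, then the symbols of $S$ are spread nearly evenly, and in particular there must be \emph{two} symbols $a, b$ whose combined count $n_a + n_b$ is close to $2L/k$; restricting both strings to just the symbols $\{a,b\}$ gives a binary LCS instance whose optimum is at least $n_a + n_b$, on which a $(1/2+\epsilon)$-approximation returns a common subsequence of length at least $(1/2+\epsilon)(n_a+n_b) \approx (1/k + \text{something})\cdot L$. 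Conversely, if $n_m$ is bounded \emph{away} from $L/k$ — say $n_m \geq (1/k + c)L$ — then the trivial unary algorithm already beats $1/k$ by a constant.

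Concretely, I would run the following $O(k^2)$ binary instances: for every ordered pair — actually every unordered pair — $\{a, b\}$ of distinct symbols, form $A_{a,b}$ (resp. $B_{a,b}$) by deleting from $A$ (resp. $B$) all symbols other than $a$ and $b$, and recoding $a \mapsto 0$, $b \mapsto 1$. Each such string has length at most $n$, and each instance is constructed in $O(n)$ time, for $O(k^2 n)$ total, but since the paper states $O(n)$ I would instead observe each symbol of $A$ lands in exactly $k-1$ of these instances, giving $O(kn) = O(n)$ for fixed $k$. Also include the $k$ trivial unary instances (or simply compute the longest common unary subsequence directly). Run the assumed $(1/2+\epsilon)$-approximate binary LCS algorithm on each binary instance, and return the longest common subsequence found across all of them (lifting the binary solution back to a $\{a,b\}$-subsequence of $A, B$ in $O(n)$ time). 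The correctness claim is: the best output has length at least $(1/k + \epsilon_k) L$ for some $\epsilon_k = \epsilon_k(\epsilon, k) > 0$.

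For the analysis, set a threshold parameter $t$ (to be optimized, roughly $t = \Theta(\epsilon/k)$). \textbf{Case 1:} some symbol $a$ has $n_a \geq (1/k + t)L$. Then the unary instance for $a$ already certifies a common subsequence of length $\geq (1/k+t)L$, done. \textbf{Case 2:} every symbol has $n_a < (1/k+t)L$. Then I claim there exist two distinct symbols $a,b$ with $n_a + n_b \geq (2/k - (k-2)t/(k-1))L$, or some similar bound that stays close to $2L/k$ for small $t$: because the counts sum to $L$ over $k$ symbols, the two largest counts together are at least $2L/k$ minus a correction controlled by how far below $(1/k+t)L$ the largest count sits; a short averaging/pigeonhole argument gives this. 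The binary instance $\{a,b\}$ has optimum $\geq n_a + n_b$, so the $(1/2+\epsilon)$-approximation returns length $\geq (1/2+\epsilon)(n_a+n_b) \geq (1/2+\epsilon)(2/k - O(t))L = (1/k + 2\epsilon/k - O(t))L$. Choosing $t$ small enough that the $2\epsilon/k$ term dominates the $O(t)$ loss — say $t = \epsilon/(k \cdot C)$ for the implicit constant $C$ — makes both cases yield at least $(1/k + \epsilon_k)L$ with $\epsilon_k = \Omega(\epsilon/k)$.

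The main obstacle is the \textbf{pigeonhole step in Case 2}: the naive statement "two largest counts sum to $\geq 2L/k$" is \emph{false} when counts are unequal (e.g. one symbol has almost all of $L$), which is exactly why Case 1 is needed as a safety valve — but I must verify the quantitative interplay, namely that the threshold $t$ separating the cases can be chosen so the Case-2 bound on $n_a + n_b$ degrades slowly enough in $t$ that $(1/2+\epsilon)(n_a+n_b)$ still clears $(1/k)L$ by a constant. Getting the right dependence of $\epsilon_k$ on $\epsilon$ and $k$ hinges entirely on this tradeoff; everything else (the construction, the lifting of binary solutions, the runtime bookkeeping) is routine. One subtlety to double-check: the binary LCS algorithm is only assumed to work for possibly-unequal-length strings, which is fine here since the restricted strings $A_{a,b}, B_{a,b}$ generally have different lengths — this is precisely why the "unequal length" hypothesis in the theorem statement is essential.
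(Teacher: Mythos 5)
Your construction is the same as the paper's (restrict to all $\binom{k}{2}$ pairs of symbols, run the binary approximator on each, also keep the unary candidates), but the "main obstacle" you flag in Case 2 is not an obstacle at all, and your counterexample to the pigeonhole claim is wrong. If $n_1 \ge n_2 \ge \dots \ge n_k$ are the symbol counts in an optimal LCS $S$ with $\sum_i n_i = L$, then the mean of the two largest counts is at least the overall mean, so $n_1 + n_2 \ge 2L/k$ \emph{unconditionally} — in your proposed counterexample where one symbol has almost all of $L$, the top two counts sum to almost $L$, which certainly exceeds $2L/k$. Consequently no threshold $t$ and no case split are needed: restricting to the two most frequent symbols of $S$ gives a binary instance with optimum at least $(2/k)L$, and a $(1/2+\epsilon)$-approximation on it returns a common subsequence of length at least $(1/2+\epsilon)(2/k)L = (1/k + 2\epsilon/k)L$, yielding $\epsilon_k = 2\epsilon/k$ directly. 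This is exactly the paper's argument (Theorem~\ref{thm:alph-reduction} with $\ell=2$). Your Case~1/Case~2 machinery is harmless but superfluous, and as written your proof is incomplete precisely at the step you correctly identified as load-bearing — you never actually establish the Case-2 lower bound on $n_a+n_b$, only assert that "a short averaging/pigeonhole argument gives this" while simultaneously doubting it. The remaining bookkeeping (lengths at most $n$, $O(n)$ construction for fixed $k$, lifting binary solutions back, and the observation that the restricted strings may have unequal lengths, which is why the unequal-length binary hypothesis is essential) matches the paper and is fine.
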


In other words, in order to beat the longstanding $1/k$-approximation algorithm for LCS, one merely needs to obtain a better than $1/2$ approximation for the binary case, i.e. to extend the Rubinstein and Song result to strings of possibly unequal length.

Our second result generalizes the Rubinstein-Song result by proving that one can beat the simple $1/k$-approximation algorithm for every constant $k$, as long as the input strings have equal length.

      \begin{theorem}
            \label{thm:equal-length-intro}
            Given two strings $A$ and $B$ of length $n$ over an arbitrary alphabet of size $k$, there exist positive constants $\epsilon$ and $\delta$ such that we can compute a $1/k + \eps$ approximation for the longest common subsequence of $A$ and $B$ in $O(n^{2-\delta})$ time.
        \end{theorem}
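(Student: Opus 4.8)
The plan is to peel off easy regimes by elementary arguments and reduce what is left to (an adaptation of) the Rubinstein--Song binary algorithm. Since we do not know $\OPT=\lcs(A,B)$ in advance, we run all of the procedures below and output the longest common subsequence any of them finds.

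\emph{Step 1 (reduce to a balanced core regime).} For a symbol $\sigma$ let $c_\sigma,d_\sigma$ count its occurrences in $A,B$. The unary algorithm returns a common subsequence of length $T:=\max_\sigma\min(c_\sigma,d_\sigma)$, so we may assume $T<(1/k+\eps)\OPT$. Since $\OPT\le\sum_\sigma\min(c_\sigma,d_\sigma)\le kT$, the quantities $\OPT$, $\sum_\sigma\min(c_\sigma,d_\sigma)$ and $kT$ then agree up to a $(1\pm O(k\eps))$ factor; so each $\min(c_\sigma,d_\sigma)$ is a $(1\pm O(k^2\eps))$--fraction of $T$, and writing $o_\sigma$ for the multiplicity of $\sigma$ in a fixed optimal common subsequence, $\sum_\sigma o_\sigma=\OPT$ forces $o_\sigma\ge(1-O(k^2\eps))\min(c_\sigma,d_\sigma)$ for all $\sigma$. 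Also, if $\OPT\le n^{1-\delta}$ the output-sensitive dynamic program computes $\lcs(A,B)$ exactly in $O(n\cdot\OPT)=O(n^{2-\delta})$ time (run it with this budget, abort if exceeded), so we may also assume $\OPT>n^{1-\delta}$; this is the \emph{core regime}.

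\emph{Step 2 (reduce to binary).} For a pair $\{\sigma,\tau\}$ let $A_{\sigma\tau},B_{\sigma\tau}$ be $A,B$ with all symbols outside $\{\sigma,\tau\}$ deleted; any common subsequence of $A_{\sigma\tau},B_{\sigma\tau}$ is one of $A,B$, and restricting the optimal one gives $\lcs(A_{\sigma\tau},B_{\sigma\tau})\ge o_\sigma+o_\tau\ge(1-O(k^2\eps))\,2T\ge(1-O(k^2\eps))(2/k)\OPT$. Thus in the core regime \emph{every} pair is ``good'', and it suffices to $(1/2+\eps')$--approximate the LCS of a \emph{single} binary pair in truly subquadratic time (taking the best over all $\binom k2$ pairs): this yields a common subsequence of $A,B$ of length $(1/2+\eps')(1-O(k^2\eps))(2/k)\OPT=(1/k+2\eps'/k-O(k\eps))\OPT$, which beats $1/k$ once $\eps$ is chosen small relative to $\eps'$ and $k$.

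\emph{Step 3 (easy subcases of the binary problem).} The catch is that $|A_{\sigma\tau}|=c_\sigma+c_\tau$ and $|B_{\sigma\tau}|=d_\sigma+d_\tau$ are in general unequal, whereas Rubinstein--Song assume equal-length inputs. When $\OPT\ge(1-\gamma)n$ for a suitably small constant $\gamma$, the identities $\sum_\sigma c_\sigma=\sum_\sigma d_\sigma=n$ together with $\sum_\sigma\min(c_\sigma,d_\sigma)\ge(1-\gamma)n$ give $\sum_\sigma\abs{c_\sigma-d_\sigma}\le 2\gamma n$, so $\abs{\,|A_{\sigma\tau}|-|B_{\sigma\tau}|\,}$ is an $O(k\gamma)$--fraction of $\lcs(A_{\sigma\tau},B_{\sigma\tau})$; pad the shorter string with a single block of one of its two symbols to equalize lengths (this cannot decrease, and raises by at most the padding length, the LCS), run Rubinstein--Song, then delete padding symbols from the output, losing only an $O(\gamma)\OPT$ additive term. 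Alternatively, for $\OPT$ exceeding an appropriate constant fraction of $n$ one can skip binary entirely, using $\lcs(A,B)\ge n-\ed(A,B)$ and $\ed(A,B)\le 2(n-\OPT)$: the matched positions of a near-linear-time constant-factor approximate edit-distance alignment already form a common subsequence beating $(1/k)\OPT$.

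\emph{Step 4 (the main obstacle).} What remains is the regime $n^{1-\delta}<\OPT\le(1-\gamma)n$ with balanced frequencies, where $\sum_\sigma\abs{c_\sigma-d_\sigma}\approx 2(n-\OPT)$ is a constant fraction of $n$ --- comparable to $\OPT$ --- so the binary pairs can be far from equal length and neither trick of Step 3 applies. Each binary pair still satisfies the structural promise $\lcs(A_{\sigma\tau},B_{\sigma\tau})\approx\min(c_\sigma,d_\sigma)+\min(c_\tau,d_\tau)$, i.e.\ is as long as the symbol counts allow; the plan is to run the Rubinstein--Song block-decomposition machinery on such a pair --- partition one string into blocks, match blocks to intervals of the other via an edit-distance--approximation subroutine (the ``spread-out'' case), and use a birthday-paradox sampling step that recurses on much smaller instances (the ``concentrated'' case) --- but \emph{adapted} so that it tolerates inputs of polynomially different lengths while exploiting this promise. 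Re-establishing every step of that analysis under unequal lengths, and choosing $\delta$, $\gamma$, the block length and the various $\eps$'s so that the four regimes above cover all inputs, is where I expect essentially all of the work to lie.
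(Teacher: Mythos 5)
Your high-level reduction (peel off the unary-wins case, the huge-LCS case handled by approximate edit distance, and the tiny-LCS case handled by the output-sensitive dynamic program, then pass to binary restrictions) matches the skeleton of the paper's argument, and Steps 1--3 are essentially sound. But the entire content of the theorem lives in the regime you defer to Step 4, and there you have a plan rather than a proof: you propose to ``re-establish every step'' of the Rubinstein--Song machinery for binary inputs of polynomially different lengths. A general unequal-length binary $(1/2+\eps)$-approximation is precisely the open problem this theorem is designed to route around (by \Cref{corr:bin-reduction} it would immediately imply the whole theorem for every $k$), so as written the argument is circular at its core, and the sketched strategy (block decomposition plus birthday-paradox sampling, tolerating arbitrary length ratios) is not known to go through.

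The paper closes this gap differently, and the difference is the crux. Rather than approximating an arbitrary binary pair, it \emph{selects} the subalphabet: \Cref{lem:weak-restrict} shows that if neither $A$ nor $B$ is balanced (the balanced case having already been dispatched via edit distance), one can find in linear time a two-symbol subalphabet whose restrictions $X,Y$ are \emph{both} imbalanced. Combined with \Cref{lem:distant-freq} (if $\min(0(X),0(Y))$ and $\min(1(X),1(Y))$ are far apart, unary already beats $1/2$) and a short case analysis on which string is longer, this forces the very specific configuration $1(X)\approx 0(Y)$ with both at most $(1/2-\rho)m$ for $m=|Y|$ the shorter length. Under \emph{that} promise --- not in general --- the frequency casework of Rubinstein--Song (splitting off left/right ends of length $\alpha m$, which are equal-length across the two strings, and combining unary matches with edit-distance approximation on those equal-length pieces) survives unequal input lengths; this is \Cref{lm:imbalanced} and all of \Cref{sec:not-an-appendix}. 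Your Step 2 observation that every pair is ``good'' lower-bounds the restricted LCS but gives no control over the restrictions' frequency structure, so no such casework can be invoked; and your Step 1 saturation bound $o_\sigma\ge(1-O(k^2\eps))\min(c_\sigma,d_\sigma)$, while correct, does not substitute for the imbalance condition the binary analysis actually needs. To complete the proof you would need both the subalphabet-selection lemma and the full unequal-length re-derivation of the binary case analysis.
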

In fact, our algorithm can actually $(1/k+\eps)$-approximate the LCS in near-linear time.
Note that our result applies only to strings with equal input lengths.
The relevance of this restriction is discussed in \Cref{sec:input-length}, which also contains the proof of Theorem~\ref{reduction-intro}.
We present the proof of \Cref{thm:equal-length-intro} in \Cref{sec:alph-size}.

\paragraph*{Preliminaries.}
We write approximation ratios as constants less than $1$,
so that for example a $1/2 + \epsilon$ approximation algorithm for the LCS of $A$ and $B$ is an algorithm that returns a common subsequence of $A$ and $B$ with length at least $(1/2 + \epsilon)\cdot \lcs(A,B)$. 

When we discuss edit distance in the rest of the paper we mean the version of edit distance that does not allow symbol substitutions but only measures the number of insertions and deletions. For constant factor approximation algorithms, this version of the problem is equivalent to the original.

\section{Reduction to Binary Alphabets \& Input Length Conditions}
\label{sec:input-length}

We begin by showing how to reduce nontrivial constant factor approximations of LCS over large alphabets to better than $1/2$ approximations of LCS over binary alphabets.
Although we do not directly apply this reduction in our proof of \Cref{thm:equal-length-intro}, the reduction is elegant and motivates the approach we end up using. Moreover, the reduction works even for strings of non-equal lengths, thus showing that one merely needs to extend the Rubinstein-Song result to non-equal length strings in order to truly improve upon the trivial alphabet-size approximation algorithm.
We will need the following definition.

\begin{definition}[Restrictions]
Given an alphabet $\Sigma$,
we call a subset $\Sigma'\sub\Sigma$ a \emph{subalphabet}.
Given a string $A$ from alphabet $\Sigma$ the \emph{restriction} of $A$ to a subalphabet $\Sigma'$ is the 
    maximum subsequence of $A$ whose characters are all in $\Sigma'$.
\end{definition}

\begin{theorem}
    \label{thm:alph-reduction}
    Fix integers $s$ and $\ell$ with $s > \ell \ge 2$.
    Suppose that there is a $T(n,\ell)$ time algorithm that achieves an $1/(\ell-\eps)$-approximation of the LCS of two strings of length at most $n$ from an alphabet of size $\ell$. Then, there is also a $O((n+T(n,\ell))\binom s\ell)$ time algorithm that achieves an $1/(s(1-\eps/\ell))$-approximation of the LCS of two strings of length at most $n$ from an alphabet of size $s$.
\end{theorem}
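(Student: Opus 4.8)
The plan is to enumerate all $\binom s\ell$ subalphabets $\Sigma'\sub\Sigma$ of size $\ell$, form the restrictions $A|_{\Sigma'}$ and $B|_{\Sigma'}$, run the given $\ell$-ary approximation algorithm on each such pair, and output the longest common subsequence found across all of these runs. Since the restriction of a string to a subalphabet is a subsequence of that string, every common subsequence of $A|_{\Sigma'}$ and $B|_{\Sigma'}$ is also a common subsequence of $A$ and $B$, so whatever we output is a legitimate common subsequence of $A$ and $B$. Hence establishing the approximation ratio reduces to exhibiting \emph{one} size-$\ell$ subalphabet $\Sigma'$ with $\lcs(A|_{\Sigma'},B|_{\Sigma'}) \ge (\ell/s)\cdot\lcs(A,B)$: running the $\ell$-ary algorithm on that $\Sigma'$ then returns a common subsequence of length at least $\frac{1}{\ell-\eps}\cdot\frac{\ell}{s}\cdot\lcs(A,B) = \frac{1}{s(1-\eps/\ell)}\cdot\lcs(A,B)$, and the best subalphabet we enumerate can only do better.

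To find such a $\Sigma'$, fix an optimal common subsequence $C$ of $A$ and $B$, so $|C| = \lcs(A,B)$, and let $\Sigma'$ be the set of the $\ell$ symbols that occur most frequently in $C$ (breaking ties arbitrarily, and padding with arbitrary extra symbols if $C$ uses fewer than $\ell$ distinct symbols). The restriction of $C$ to $\Sigma'$ is a common subsequence of $A|_{\Sigma'}$ and $B|_{\Sigma'}$ whose length equals the number of positions of $C$ holding a symbol of $\Sigma'$. So it suffices to show that the $\ell$ most frequent symbols of $C$ account for at least an $\ell/s$ fraction of its length. This is a short averaging argument: writing $f_1 \ge f_2 \ge \dots \ge f_s$ for the multiplicities in $C$ of the (at most) $s$ symbols of $\Sigma$, we have $\sum_{i\le\ell} f_i \ge \ell f_\ell$ and $\sum_{i>\ell} f_i \le (s-\ell)f_\ell$, hence $(s-\ell)\sum_{i\le\ell}f_i \ge \ell\sum_{i>\ell}f_i$, which rearranges to $\sum_{i\le\ell}f_i \ge (\ell/s)\sum_i f_i = (\ell/s)|C|$.

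For the running time, computing the two restrictions takes $O(n)$ time, and the restricted strings have length at most $n$, so each invocation of the $\ell$-ary algorithm costs $T(n,\ell)$; since there are $\binom s\ell$ subalphabets and we keep a running maximum, the total is $O\!\big((n+T(n,\ell))\binom s\ell\big)$, as claimed.

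I do not anticipate a real obstacle here — the construction is direct and the only step needing care is the averaging inequality above, together with the (easy but essential) observation that restrictions preserve the subsequence relation, so that a solution on $A|_{\Sigma'},B|_{\Sigma'}$ pulls back to a solution on $A,B$ of the same length. A possible point of friction is that $T(n,\ell)$ bounds the cost on inputs of length \emph{at most} $n$, which is exactly what we need since the restrictions can be shorter than $n$; no monotonicity of $T$ is required.
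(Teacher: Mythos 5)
Your proposal is correct and follows essentially the same approach as the paper: enumerate all $\binom s\ell$ subalphabets, restrict both strings, run the $\ell$-ary algorithm on each pair, and argue via averaging that the subalphabet of the $\ell$ most frequent symbols in an optimal common subsequence retains at least an $\ell/s$ fraction of it. Your write-up is in fact slightly more explicit than the paper's on the averaging inequality and on why a solution for the restricted instance pulls back to a common subsequence of the original strings.
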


\begin{proof}
We will show how to reduce the LCS for two strings of length at most $n$ over a $s$-ary alphabet, to the LCS for two strings of length at most $n$ over 
an $\ell$-ary alphabet for any $\ell<s$. 
The reduction runs in $O(n{s\choose \ell})$ time and produces ${s \choose \ell}$ instances of $\ell$-ary alphabet LCS.

Let $A$ and $B$ two strings of length at most $n$ over an alphabet $\Sigma$ of size $s$.
Let $C$ be the longest common subsequence of $A$ and $B$ (we do not know $C$). 
For the sake of argument, sort the alphabet symbols according to their number of occurrences in $C$. 

Let $x$ be the collection of the $\ell$ most frequent alphabet symbols in $C$. 
Let $C_x$ be the subsequence of $C$ obtained by restricting $C$ to the subalphabet of $\Sigma$ that contains the symbols of $x$. 
Since $x$ has the $\ell$ most frequent symbols in $C$, $C_x$ contains at least an $\ell/s$ fraction of $C$.
 
Now, let us describe our algorithm. 
Given $A$ and $B$, we consider all subsets of the alphabet consisting of precisely $\ell$ symbols
(one of these subsets will be $x$.)
For each such collection $y$, consider the sub-instance of the LCS instance restricted to the symbols of $y$. 
Let $\OPT(y)$ be the optimal LCS for this instance.

We know that
$|\OPT(x)|\geq |C_x|\geq (\ell/s) |C|$. 
So when we consider $y=x$, if we can efficiently obtain an $1/(\ell-\eps)$  approximation for $\OPT(x)$, we will get a common subsequence of $A$ and $B$ of length at least
\[\frac{|\OPT(x)|}{\ell-\eps} \geq \frac{|C|}{s(1-\eps/\ell)},\] 
%\ge |C|\cdot \frac 1s\grp{1 + \frac{\epsilon}\ell},\] 
and thus yields the desired approximation for the LCS of $A$ and $B$.
The running time is multiplied by $\binom s\ell$ which is a constant, as long as $s$ is.
\end{proof}

By setting $\eps=\ell\delta s/(1+\delta s)$ in the above Theorem statement, we obtain a linear time reduction from obtaining a $1/s+\delta$-approximation for $s$-ary strings to a $1/\ell + (\delta s)/\ell$-approximation for $\ell$-ary strings.
We obtain \Cref{reduction-intro} as a corollary:
\begin{corollary}
    \label{corr:bin-reduction}
    Fix an integer $s\ge 3$ and a constant $\delta>0$.
    The problem of obtaining a $1/s+\delta$
    approximation for the LCS of two strings from an alphabet of size $s$ can be reduced in linear time to the problem of obtaining a $1/2 + (\delta s)/2$ approximation for the LCS of two strings binary strings (i.e. strings from an alphabet of size two).
\end{corollary}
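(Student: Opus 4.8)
The plan is to obtain this statement as the special case $\ell = 2$ of \Cref{thm:alph-reduction}, which we may take as already proved. The hypothesis $s \ge 3$ is exactly what is needed for the theorem's requirement $s > \ell$ with $\ell = 2$, so the theorem applies: it reduces an $s$-ary LCS instance on strings of length at most $n$ to $\binom{s}{2}$ binary LCS instances in $O\!\left(n\binom{s}{2}\right)$ time, and it turns any $1/(2-\eps)$-approximation algorithm for those binary instances into a $1/\!\left(s(1-\eps/2)\right)$-approximation for the original instance. The only thing left to do is to choose $\eps$ so that the binary ratio becomes $1/2 + (\delta s)/2$ and to check that the output ratio is then $1/s + \delta$.

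For the parameter matching I would write the given binary ratio as $1/(2-\eps)$ and solve $1/(2-\eps) = (1+\delta s)/2$, which yields $\eps = 2\delta s/(1+\delta s) > 0$; this is precisely the substitution $\eps = \ell\,\delta s/(1+\delta s)$ with $\ell = 2$ recorded just before the corollary. Plugging this back into the theorem's guarantee, one has $\eps/2 = \delta s/(1+\delta s)$, hence $1 - \eps/2 = 1/(1+\delta s)$ and $1/\!\left(s(1-\eps/2)\right) = (1+\delta s)/s = 1/s + \delta$, as claimed. Since $s$ is a fixed integer, $\binom{s}{2} = O(s^2)$ is a constant, so the reduction runs in $O(ns^2) = O(n)$ time and produces $O(s^2)$ binary instances; this matches the stated "linear time" reduction.

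There is essentially no new obstacle here, since all the substance is contained in \Cref{thm:alph-reduction}: the observation that the two most frequent symbols of the (unknown) optimal common subsequence account for at least a $2/s$ fraction of its length, together with the fact that for constant $s$ one can afford to enumerate all $\binom{s}{2}$ size-two subalphabets and keep the best binary answer. The one point I would be careful to flag is the range of validity of the parameter translation: the substitution is meaningful only when the resulting binary ratio $1/2 + (\delta s)/2$ is at most $1$, i.e.\ $\delta < 1/s$ (equivalently $\eps < 1$), which is exactly the regime of interest when the goal is to beat the trivial $1/s$-approximation by a small constant. Outside this regime a $(1/2+(\delta s)/2)$-approximation for binary LCS would be stronger than exact computation, so the reduction statement is naturally read with $\delta < 1/s$.
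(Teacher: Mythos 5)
Your proposal is correct and matches the paper's own proof: the corollary is obtained exactly as the $\ell=2$ specialization of \Cref{thm:alph-reduction}, with the parameter substitution $\eps = 2\delta s/(1+\delta s)$ that the paper records immediately before the corollary, and the $\binom{s}{2}=O(s^2)$ overhead being constant for fixed $s$. Your explicit verification that $1/(s(1-\eps/2)) = 1/s+\delta$ and your remark about the regime $\delta < 1/s$ are accurate elaborations of the same argument.
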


\begin{proof}
    This follows from \Cref{thm:alph-reduction} by taking $\ell = 2$. The reduction has $O(s^2)$ overhead.
\end{proof}

The reason we cannot prove \Cref{thm:equal-length-intro} by combining \Cref{corr:bin-reduction} with the result of \cite{approxED-LCS} is that the latter gives a better than $1/2$ approximation for strings from an alphabet of size $2$ only when the input strings have \emph{equal length}.
Note that in the reduction from the proof of \Cref{thm:alph-reduction}, the subsequences obtained from restrictions to subalphabets may be of different lengths even if the original strings have equal lengths.

Although at first it may seem that extending the Rubinstein-Song result to strings of differing length should not be too hard, generalizing the result does not appear straightforward.
In the following we will discuss why this not simple.
The main hurdles come from the lemma and algorithm used in \cite{approxED-LCS} stated below.

    \begin{lemma}[LCS and Edit Distance Connection]
        \label{lm:lcs-ed}
        For any strings $X$ and $Y$ of length $n$ and $m$ respectively,
        we have     
        \[2\cdot\lcs(X,Y) + \ed(X,Y) = n + m.\]
        
    \end{lemma}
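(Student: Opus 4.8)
The plan is to prove the two inequalities $\ed(X,Y) \le n + m - 2\lcs(X,Y)$ and $\ed(X,Y) \ge n + m - 2\lcs(X,Y)$ separately, using the fact that in our model $\ed$ counts only insertions and deletions.

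For the upper bound, I would start from a longest common subsequence $C$ of $X$ and $Y$, with $|C| = \lcs(X,Y)$, and exhibit an explicit edit sequence: delete from $X$ the $n - |C|$ characters not belonging to (a fixed occurrence of) $C$, then insert the $m - |C|$ characters needed to build $Y$ around the surviving copy of $C$. Since $C$ occurs as a subsequence of both strings in the same order, this sequence of $(n - |C|) + (m - |C|) = n + m - 2\lcs(X,Y)$ operations transforms $X$ into $Y$, giving $\ed(X,Y) \le n + m - 2\lcs(X,Y)$.

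For the lower bound, I would take any optimal edit sequence transforming $X$ into $Y$ using $d$ deletions and $i$ insertions, so $\ed(X,Y) = d + i$. The key observation is that the characters of $X$ that are never deleted appear, in order, as a contiguous-in-order subsequence of the final string $Y$, so they form a common subsequence of $X$ and $Y$ of length $n - d$; hence $n - d \le \lcs(X,Y)$. The bookkeeping step is to also use the length constraint: the final string has length $m = n - d + i$, so $i = m - n + d$, and therefore
\[
\ed(X,Y) = d + i = 2d + m - n \ge 2\bigl(n - \lcs(X,Y)\bigr) + m - n = n + m - 2\lcs(X,Y).
\]
Combining the two inequalities yields $\ed(X,Y) = n + m - 2\lcs(X,Y)$, which rearranges to the claimed identity.

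There is no serious obstacle here; the only point requiring a little care is the lower bound, where one must simultaneously track deletions and insertions and invoke the constraint that the output length equals $m$ — bounding $d$ alone by $n - \lcs(X,Y)$ is not enough without relating $i$ to $d$. Everything else is routine, so I would keep the write-up short.
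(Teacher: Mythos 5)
Your proof is correct and rests on the same underlying correspondence between common subsequences and insertion/deletion scripts that the paper uses; you merely organize it as two explicit inequalities ($\ed \le n+m-2\lcs$ via an edit script built from an LCS, and $\ed \ge n+m-2\lcs$ via the surviving characters of an optimal script) rather than reading both quantities off a single optimal alignment. The extra care in your lower bound --- relating the number of insertions to the number of deletions through the constraint that the output has length $m$ --- is a detail the paper's proof glosses over, so nothing needs to change.
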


    For the sake of completeness, we include a proof of the above lemma.
\begin{proof}
        
        Consider an optimal alignment between $X$ and $Y$, which matches the maximum possible number of characters of $X$ with identical characters of $Y$ while respecting the order in which the characters appear in each string.
        The characters that are matched in $X$ and $Y$ correspond to a longest common subsequence.
        This is because if there were a longer common subsequence, we could get a larger alignment by matching the characters of the subsequence in $X$ and $Y$, but this would contradict the optimality of $X$ and $Y$.
        
        Similarly, the unmatched characters correspond to a minimum set of symbols that need to be deleted from $X$ and inserted from $Y$ to turn $X$ into $Y$.
        If there were a smaller edit distance computation, then all the characters which were not deleted or inserted could be paired up to form a larger alignment, again contradicting optimality.
        
        Thus there are exactly $2\cdot \lcs(X,Y)$ characters paired up in the alignment and $\ed(X,Y)$ unmatched characters.
        These encompass all the characters in $X$ and $Y$, and thus account for $n+m$ symbols.
    \end{proof}

    \begin{definition}[Approximating LCS through Edit Distance]
    \label{def:approxed}
    Given strings $A$ and $B$ of length $n$, the algorithm $\approxed(A,B)$ approximates the edit-distance between $A$ and $B$ and then returns the lower bound on the LCS implied by this.
    More precisely, the algorithm computes an approximate edit distance $\widetilde{\ed}(A,B)$ and then returns
        \begin{equation}
        \label{eq:ed-subtract}
        n - \frac 12 \cdot \widetilde{\ed}(A,B).
        \end{equation}
    \end{definition}
    
    As stated $\approxed$ can use any edit distance approximation $\widetilde{\ed}$ as a black box.
    For concreteness, we will take $\widetilde{\ed}$
    to be the edit distance algorithm from \cite{simpler-ED} which can achieve an approximation ratio of $c$ for any constant $c > 3$ and runs in truly subquadratic time.
    
    In \cite{approxED-LCS}, the authors use the $\approxed$ algorithm to handle the case of binary strings with large LCS.
    In this case, they notice that if the LCS is large then the edit distance must be small.
    Then constant factor approximations to edit distance will give good lower bounds on the LCS because in the calculation from \Cref{eq:ed-subtract} we are subtracting off a small quantity from the maximum possible LCS value of $n$.
    
    However, if we tried extending this algorithm to the case where the inputs $X$ and $Y$ have lengths $n$ and $m$ with $n = 100m$ (for example) by using the identity from \Cref{lm:lcs-ed}, then even when the LCS is large (say of length $(1-\epsilon)m$ for some small positive $\epsilon$) the edit distance will still be very large compared to the length of the smaller string (at least $(99 + \epsilon)m$).
    So even a $3$-approximation to edit-distance would incur massive error when trying to approximate LCS by computing  
        \[\frac 12\cdot \grp{n + m - \widetilde{\ed}(X,Y)}\]
    and the result would not give any nontrivial lower bound for the LCS.
    In other words, when the input strings have very different lengths it is not clear how to use approximate edit distance in general to obtain good approximations for LCS.
    This is essentially why the algorithm from \cite{approxED-LCS} and \Cref{thm:equal-length-intro} both require equal length inputs.

\section{Extending Alphabet Size}
\label{sec:alph-size}
This section proves \Cref{thm:equal-length} which restates \Cref{thm:equal-length-intro} from the introduction slightly.

 \begin{theorem}
     \label{thm:equal-length}
Given two strings $A$ and $B$ of length $n$ over an arbitrary alphabet $\Sigma$ of size $s$, there exist a positive constant $\epsilon$ such that we can compute a $1/s + \eps$ approximation for the longest common subsequence $\lcs(A,B)$ of $A$ and $B$ in truly subquadratic time.
\end{theorem}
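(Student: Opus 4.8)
The plan is to handle two regimes separately, depending on whether $\lcs(A,B)$ is large or small relative to $n$, mirroring the Rubinstein--Song strategy but setting up each regime so that the reduction to a binary subproblem never leaves the equal-length world. Fix a small threshold $\lambda\in(0,1)$ to be chosen later.

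\textbf{Case 1: large LCS.} Suppose $\lcs(A,B)\ge (1-\lambda)n$. Since both inputs have length exactly $n$, \Cref{lm:lcs-ed} gives $\ed(A,B)=2n-2\lcs(A,B)\le 2\lambda n$, which is small compared to $n$. Run $\approxed(A,B)$ using the edit distance algorithm of \cite{simpler-ED} with approximation ratio $c$ for some fixed constant $c>3$; it runs in truly subquadratic time and outputs $\widetilde{\ed}(A,B)\le c\cdot\ed(A,B)\le 2c\lambda n$. By \Cref{def:approxed} the returned value is $n-\tfrac12\widetilde{\ed}(A,B)\ge n-c\lambda n=(1-c\lambda)n$. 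Choosing $\lambda$ small enough that $1-c\lambda > 1/s+\eps$ (which is possible for any $\eps<1-1/s$ once $\lambda<(1-1/s-\eps)/c$), this already beats the $1/s$ bound comfortably, since $\lcs(A,B)\le n$.

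\textbf{Case 2: small LCS.} Now suppose $\lcs(A,B)<(1-\lambda)n$. Let $C$ be an (unknown) optimal common subsequence, sort $\Sigma$ by frequency of occurrence in $C$, and let $\sigma$ be the most frequent symbol, so the common unary subsequence in $\sigma$ has length $\ge \lcs(A,B)/s$ — the trivial bound. The idea is to win a constant additive fraction over this. The key structural observation is: because $|C|<(1-\lambda)n$, a $\lambda$-fraction of the characters of $A$ (and of $B$) lie outside any optimal alignment, so there is "slack" to exploit. Concretely, I would argue that \emph{either} some single symbol $\sigma$ already contributes more than a $(1/s+\eps')$-fraction of $C$ — in which case the trivial unary algorithm, run over all $s$ symbols and taking the best, succeeds in linear time — \emph{or} the frequencies of the symbols in $C$ are all close to $|C|/s$, i.e. "balanced." In the balanced subcase, group the $s$ symbols into two super-symbols $0$ and $1$ of roughly $s/2$ symbols each, forming binary strings $A',B'$ from $A,B$ by relabeling; crucially, relabeling does not change lengths, so $|A'|=|B'|=n$ and the equal-length binary subroutine applies. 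One checks $\lcs(A',B')\ge |C|$, and (this is the crux) that a $(1/2+\eps)$-approximate binary common subsequence, when "refined" back by splitting each super-symbol block optimally within itself, yields a common subsequence of $A,B$ of length at least $(1/2+\eps)\cdot(2/s)\cdot|C|\cdot(\text{correction})$, beating $|C|/s$ by a constant factor — here is where balance is used, to guarantee each of the two halves of a matched super-symbol run can be realized by a genuine common subsequence of $A,B$ of length a constant fraction of the run. Since we do not know which symbols are most frequent in $C$, we try all $O(s^2)$ groupings into two super-symbols of the prescribed sizes, run the binary subroutine on each, and return the best overall result; the total running time is $O(s^2)$ times the subquadratic binary running time, still truly subquadratic for constant $s$.

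\textbf{Main obstacle.} The delicate point is the refinement step in Case 2: a common subsequence of the \emph{binarized} strings matches runs of super-symbol $0$ with runs of super-symbol $0$, but within a matched run of length $r$ the underlying $s/2$ real symbols in $A$ need not form a long common subsequence with those in $B$ — in the worst case the overlap could be only $2r/s$, which when composed with the $1/2$ binary loss gives only $\approx |C|/s$ and no gain. Overcoming this requires quantifying, using the small-LCS/balanced hypothesis, that one can always choose the grouping (among the $O(s^2)$ tried) so that the two halves are themselves "aligned" — essentially recursing the frequency argument one level down, or more cleanly, choosing the threshold $\lambda$ and the gain parameter $\eps'$ so that the unary-vs-balanced dichotomy is quantitative: if no symbol beats $1/s+\eps'$, then every pair of symbols has near-equal multiplicity in $C$, and a short combinatorial argument shows a common subsequence using exactly two symbols has length at least $(2/s - O(\eps'))|C|$, so boosting its binary LCS by the $(1/2+\eps)$ factor nets $(1/s + \Omega(\eps) - O(\eps'))|C|$; picking $\eps'\ll\eps$ closes the gap. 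Balancing the three parameters $\lambda$, $\eps'$, and the final $\eps$ against the binary subroutine's constant is the main bookkeeping, but each individual step is either the trivial unary algorithm, the black-box edit-distance approximation, or the assumed binary LCS approximation, all in truly subquadratic (indeed near-linear, apart from the binary call) time.
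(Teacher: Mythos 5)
Your Case 1 (large LCS via approximate edit distance) and the unary/balanced-LCS dichotomy opening your Case 2 are both sound and correspond to steps in the paper's proof (the edit-distance argument behind \Cref{lm:balanced-string}, and \Cref{lem:balanced-lcs}). The gap is exactly where you flag it, and neither of your two candidate fixes closes it. The super-symbol relabeling preserves equal lengths but is unusable: $\lcs(A',B')$ can be as large as $n$ while $\lcs(A,B)$ is tiny, and a matched super-symbol run of length $r$ decodes to a genuine common subsequence of length possibly only about $2r/s$, so after the $(1/2+\eps)$ binary gain you are back to roughly $|C|/s$ with no net improvement --- as you concede. Your ``more cleanly'' alternative abandons relabeling for genuine restriction to a two-symbol subalphabet, which does give $\lcs(X,Y)\ge 2(1/s-O(\eps'))|C|$ when the LCS is balanced, but the restrictions $X$ and $Y$ of two equal-length strings to a subalphabet generally have \emph{different} lengths, so the equal-length binary subroutine of Rubinstein--Song no longer applies as a black box. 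This is precisely the obstruction discussed in \Cref{sec:input-length}, and it defeats the stated goal of your construction (``never leave the equal-length world'').

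The paper's proof supplies the ingredient your proposal is missing: \Cref{lm:imbalanced}, a reworked version of the Rubinstein--Song case analysis that applies to binary strings of \emph{different} lengths, provided $1(X)$ and $0(Y)$ are both bounded away from $m/2$ and within $\delta m$ of each other. To be able to invoke it, the paper first disposes of the case where an \emph{input} string is balanced (\Cref{lm:balanced-string}), then uses \Cref{lem:weak-restrict} to select a two-symbol subalphabet whose restrictions remain unbalanced, and finally uses \Cref{lem:distant-freq} together with a short frequency casework to verify the hypotheses of \Cref{lm:imbalanced}. Without some substitute for that lemma --- that is, without actually extending the binary analysis beyond equal lengths under suitable frequency conditions --- your Case 2 does not go through.
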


    Throughout the rest of this section, we assume $A$ and $B$ refer to strings satisfying the conditions of \Cref{thm:equal-length} and that $s\ge 3$.
    We begin by establishing lemmas corresponding to easy instances of the problem.
    The following definition is useful for identifying these easy cases.
    
    \begin{definition}[Balanced Strings]
    Given a string $A$ of length $n$ from an alphabet of size $s$ and a parameter $\rho > 0$,
    we say a string is \emph{$\rho$-balanced} 
    if all its character frequencies are within $\rho n$ of $n/s$.
    
    \end{definition}

        \begin{lemma}[Balanced Inputs, adapted from Lemma 3.2 of \cite{approxED-LCS}]
            \label{lm:balanced-string}
            For all sufficiently small $\rho > 0$, if either $A$ or $B$ is $\rho$-balanced,
            we can $\grp{1/s+\gamma}$-approximate $\lcs(A,B)$ in truly subquadratic time, where $\gamma$ is some positive constant depending on $\rho$.
        \end{lemma}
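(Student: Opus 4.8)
The plan is not to branch on the input at all: we run two cheap procedures, return the longer common subsequence they produce, and argue correctness from a dichotomy on the (unknown) value $L:=\lcs(A,B)$. By symmetry of $\lcs$ we may assume that $B$ is the $\rho$-balanced string, so every symbol occurs between $n/s-\rho n$ and $n/s+\rho n$ times in $B$. Fix a small constant $\eta>0$ depending only on $s$ and on the constant $c>3$ of the black-box edit-distance approximation (its value is pinned down at the end), and consider the two cases $L\ge(1-\eta)n$ and $L<(1-\eta)n$.

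\emph{Large LCS.} If $L\ge(1-\eta)n$, invoke $\approxed(A,B)$, which is legitimate since $\abs{A}=\abs{B}=n$. By \Cref{lm:lcs-ed} we have $\ed(A,B)=2n-2L\le 2\eta n$, so a $c$-approximate edit distance is at most $2c\eta n$, and $\approxed$ produces --- from the explicit alignment underlying the edit-distance approximation --- a common subsequence of length at least $n-c\eta n=(1-c\eta)n\ge(1-c\eta)L$. Because $1/s\le 1/3$ for $s\ge 3$, choosing $\eta$ small enough forces $1-c\eta\ge 1/s+\gamma$ for a constant $\gamma>0$, so this output is a $(1/s+\gamma)$-approximation.

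\emph{Small LCS.} If $L<(1-\eta)n$, output the best common unary subsequence. Since the occurrence counts of the $s$ symbols in $A$ sum to $n$, some symbol $c_0$ occurs at least $n/s$ times in $A$, and balancedness of $B$ gives it at least $n/s-\rho n$ occurrences in $B$; hence $c_0$ yields a common subsequence of length at least $n/s-\rho n$. Provided $\rho$ is small enough that $\eta>\rho s$ --- this is exactly where the hypothesis that $\rho$ is sufficiently small enters --- a one-line computation gives $n/s-\rho n\ge(1/s+\gamma)(1-\eta)n\ge(1/s+\gamma)L$, as long as $\gamma$ is shrunk so that $\gamma(1-\eta)\le\eta/s-\rho$.

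Since we cannot tell which case we are in, we run both procedures and keep the longer of the two common subsequences found; each is a genuine common subsequence of $A$ and $B$, and by the case analysis at least one has length at least $(1/s+\gamma)L$, so the longer one does too. The unary step takes linear time and $\approxed$ runs in truly subquadratic time, so the overall algorithm is truly subquadratic. I expect the only real subtlety to be bookkeeping the constants in the right order --- fix $\eta$ small in terms of $s$ and $c$, then require $\rho<\eta/s$, then pick $\gamma>0$ meeting both $1-c\eta\ge 1/s+\gamma$ and $\gamma(1-\eta)\le\eta/s-\rho$ --- together with checking that the edit-distance approximation is constructive enough to return an actual subsequence rather than only a length estimate.
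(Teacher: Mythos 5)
Your proposal is correct and follows essentially the same route as the paper: exploit balancedness to get a unary common subsequence of length about $n/s - \rho n$, observe that if this is not already a good approximation then $\lcs(A,B)$ must be close to $n$, and in that case use \Cref{lm:lcs-ed} with a constant-factor edit-distance approximation; running both and returning the longer output is exactly the intended algorithm. The only cosmetic difference is that you phrase the dichotomy on the value of $\lcs(A,B)$ rather than on whether the unary answer already suffices, which is the same argument up to contraposition.
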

            \begin{proof}
                
                We reduce the problem to approximating the edit distance between $A$ and $B$.
                
                Without loss of generality assume $A$ is $\rho$-balanced.
                This means that all of its character frequencies are at least $(1/s - \rho)n$.
                So there exists a unary common subsequence of $A$ and $B$ of at least this length.
                If this is a $(1/s+\gamma)$ approximation we are done.
                Otherwise the LCS must be quite large:
                
                    \[\lcs(A,B) > s(1/s - \rho)n - s\gamma n = (1 - s(\rho + \gamma))n.\]
                    
                Recall from \Cref{lm:lcs-ed} that 
                    \[\ed(A,B) + 2\cdot \lcs(A,B) = 2n,\]
                where $\ed(A,B)$ denotes the (no substitutions) edit distance of $A$ and $B$.
                Using the ED approximation of \cite{simpler-ED} with some approximation ratio $c > 3$ we recover an LCS approximation of length at least
                    \[n - c\grp{n - \lcs(A,B)} > n\grp{1 - cs(\rho+\gamma)}\]
                so as long as we have $1 - cs(\rho+\gamma) \ge 1/s + \gamma$,
                this approximation is strong enough.
                This inequality holds when
                    \[\gamma \le \frac{s - 1 -cs^2\rho}{s(1+cs)}.\]
                    
                We can ensure it does by picking $\rho$ small enough that the numerator of the right hand side above is positive and then taking $\gamma$ smaller than the right hand side.
                Note that smaller values of $\rho$ correspond to larger values of $\gamma$.
            \end{proof}
        
        \begin{lemma}[Balanced LCS]
            \label{lem:balanced-lcs}
            If the LCS of $A$ and $B$ is \itnot $\rho$-balanced, then in linear time we can $(1/s + \rho/(s-1))$-approximate $\lcs(A,B)$.
        \end{lemma}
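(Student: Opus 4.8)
The plan is to show that the longest common \emph{unary} subsequence of $A$ and $B$ already achieves the claimed ratio whenever the (unknown) optimal LCS is unbalanced, and that this subsequence can be found in linear time: for each symbol $\sigma\in\Sigma$ compute the minimum of the number of occurrences of $\sigma$ in $A$ and in $B$, and output the string of copies of the symbol maximizing this quantity. This is clearly a common subsequence of $A$ and $B$ and is the longest common unary subsequence, and the computation takes $O(n)$ time.

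For the analysis, let $C$ be a longest common subsequence of $A$ and $B$ and write $L=\lcs(A,B)=|C|$. Since $C$ is \itnot $\rho$-balanced, by definition some symbol has frequency in $C$ differing from $L/s$ by more than $\rho L$. I would split into two cases according to the direction of the deviation. In the first case, some symbol $\sigma$ occurs more than $(1/s+\rho)L$ times in $C$; since $C$ is a subsequence of both $A$ and $B$, the symbol $\sigma$ occurs at least that many times in each of $A$ and $B$, so the common unary subsequence consisting of copies of $\sigma$ has length at least $(1/s+\rho)L\ge\grp{1/s+\rho/(s-1)}L$, using $s\ge 2$. In the second case, some symbol occurs fewer than $(1/s-\rho)L$ times in $C$, so the remaining (at most) $s-1$ symbols together account for more than $\grp{(s-1)/s+\rho}L$ occurrences in $C$; by averaging, one of them, say $\sigma$, occurs more than $\frac{1}{s-1}\grp{(s-1)/s+\rho}L=\grp{1/s+\rho/(s-1)}L$ times in $C$, hence at least that many times in both $A$ and $B$, again giving a common unary subsequence of the required length.

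Since the algorithm returns a longest common unary subsequence, its output is at least as long as the subsequence exhibited in whichever case applies, so it is a $\grp{1/s+\rho/(s-1)}$-approximation of $\lcs(A,B)$. There is no real obstacle here: the only points that need care are that ``$\rho$-balanced'' is measured relative to $|C|$ rather than $n$ (so the deviation thresholds scale with $L$, not $n$), and that both the over-represented-symbol and under-represented-symbol cases must be treated — but each reduces to a one-line pigeonhole argument, and the fact that the algorithm never needs to know $C$ is what makes the running time linear.
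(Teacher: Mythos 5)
Your proposal is correct and follows essentially the same approach as the paper: return the longest common unary subsequence, split into the over-represented and under-represented cases, and apply the same averaging/pigeonhole argument (with the same observation that $(1/s+\rho)\ge 1/s+\rho/(s-1)$ handling the first case). Your explicit note that the balance thresholds for the LCS scale with $|C|$ rather than $n$ is a correct and worthwhile clarification of the definition as applied here.
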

            \begin{proof}  
        Returning the longest common unary subsequence gives the desired approximation.
        To see this, let $\sigma_{\max}$ and $\sigma_{\min}$ 
        be the most frequent and least frequent characters in the LCS respectively.
        Since the LCS is not balanced, either
        $\sigma_{\max}$ makes up more than a $(1/s + \rho)$ fraction of all symbols in the LCS,
        or $\sigma_{\min}$ makes up fewer than a $(1/s - \rho)$ fraction of the symbols in the LCS.
        
        In the latter case, the $s-1$ members of the alphabet besides $\sigma_{\min}$ must account for at least an $((s-1)/s + \rho)$ fraction of characters in the LCS.
        Among these, $\sigma_{\max}$ appears the most often, which means by averaging that $\sigma_{\max}$ accounts for at least a \[\frac{1}{s} + \frac{\rho}{s-1}\]
        fraction of all symbols in the LCS.
        
        In either case, $\sigma_{\max}$ makes up at least a $(1/s + \rho/(s-1))$ fraction of the symbols in the LCS.
        Then the string consisting of $\sigma_{\max}$ repeated $\min(\sigma_{\max}(A), \sigma_{\max}(B))$ times is a common subsequence of $A$ and $B$ which has at least as many instances of $\sigma_{\max}$ as the LCS does, and thus yields the desired approximation.
            \end{proof}
            
        After handling these easy cases, our approach is to restrict $A$ and $B$ to binary strings and invoke the frequency arguments from previous work.
        As we noted before, we cannot directly use the reduction in \Cref{corr:bin-reduction} because the better than $1/2$ approximation of \cite{approxED-LCS} only applies when the inputs have the same length.
        It turns out however, that the arguments of \cite{approxED-LCS} \emph{do} hold for strings of differing length as long as the inputs satisfy a nice frequency condition.
        The following lemma demonstrates how by careful choice of subalphabets we can find restrictions meeting this condition.
        To describe this condition, we introduce some new notation:
        given a string $A$ from an alphabet $\Sigma$, for any symbol $\sigma\in \Sigma$ we let $\sigma(A)$ denote the number of times $\sigma$ appears in $A$.
    
    \begin{lemma}[Binary Restriction]
        \label{lem:weak-restrict}
        Suppose neither $A$ nor $B$ are $\rho$-balanced.
        Then there exists $\Sigma'\sub\Sigma$ with $|\Sigma'| = 2$
        such that the restrictions of $A$ and $B$ to $\Sigma'$ are not $\rho/s$-balanced.
    \end{lemma}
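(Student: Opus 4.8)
The plan is to reduce both the hypothesis and the conclusion to conditions on \emph{frequency ratios}, and then to locate a good binary subalphabet by a short case analysis on the heaviest symbols of $A$ and $B$.

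First I would record what it means for a binary restriction to fail to be $\rho/s$-balanced. A binary string of length $L$ with character counts $p\ge q$ (so $p+q=L$) is $\rho/s$-balanced exactly when $p\le (1/2+\rho/s)L$, i.e. when $p/q\le \lambda$ for $\lambda:=\frac{s+2\rho}{s-2\rho}$. Hence the restriction of a string $X$ to $\{\sigma,\tau\}$ is \itnot $\rho/s$-balanced precisely when $\max(\sigma(X),\tau(X))>\lambda\cdot\min(\sigma(X),\tau(X))$ --- a pure ratio condition on the two counts (and a symbol with count $0$ only helps, since the ratio is then infinite and the restriction is unary). So the goal becomes: exhibit one pair $\{\sigma,\tau\}\sub\Sigma$ whose counts have ratio exceeding $\lambda$ in $A$ \emph{and} in $B$; thinking of each symbol $\sigma$ as the point $(\log\sigma(A),\log\sigma(B))$, we want two symbol-points differing by more than $\log\lambda$ in both coordinates.

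Next I would extract the slack provided by the hypothesis. If a length-$n$ string $X$ is not $\rho$-balanced then, by considering whether the witnessing symbol has a surplus or a deficit and using that all $s$ counts sum to $n$, $X$ has a symbol of count $>n/s+\rho n/(s-1)$ and a symbol of count $<n/s-\rho n/(s-1)$; in particular $\max_X/\min_X>\Lambda:=\frac{(s-1)+\rho s}{(s-1)-\rho s}$, and one checks $\Lambda>\lambda$ because $\rho s/(s-1)>2\rho/s$ (equivalently $s^2>2(s-1)$). A second consequence of counts summing to $n$ that I would use repeatedly: once one symbol has count $>n/s+\rho n/(s-1)$, at most $s-2$ of the remaining symbols can have count $>n/s$, so some symbol is ``light'' (count $\le n/s$). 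With these in hand the case analysis runs as follows. Let $\sigma$ be the most frequent symbol of $A$, so $\sigma(A)>n/s+\rho n/(s-1)$. If $\sigma$ is light in $B$, let $\sigma'$ be the most frequent symbol of $B$; if $\sigma'$ is also light in $A$, the pair $\{\sigma,\sigma'\}$ already works, since its count-ratio exceeds $\frac{n/s+\rho n/(s-1)}{n/s}=1+\frac{\rho s}{s-1}$ in both strings and $1+\rho s/(s-1)>\lambda$ (which reduces to $(s-2)^2>2\rho s$, true for $s\ge 3$ and $\rho$ small). Otherwise $\sigma'$ is heavy in both $A$ and $B$; and the only remaining alternative is that $\sigma$ itself is heavy in both. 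So it suffices to treat the case that some symbol is simultaneously heavy in $A$ and $B$ --- and there one uses the ``few heavy symbols'' fact to produce a symbol light in $A$, then traces which symbols can be light or heavy in $B$ (using that $B$ too has a surplus or deficit symbol), in each branch comparing the resulting pair's count-ratios against $\lambda$; every such comparison reduces to an inequality pitting a quadratic in $s$ against a linear term, valid for $s\ge 3$ once $\rho$ is chosen small enough in terms of $s$.

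The hard part is exactly this last case: coordinating a \emph{single} binary subalphabet that is unbalanced for $A$ and for $B$ at the same time, since a pair witnessing imbalance in $A$ need not witness it in $B$. What rescues the argument is precisely the gap between $\Lambda$ and $\lambda$ --- being far from $\rho$-balanced gives a multiplicative spread strictly larger, by a factor growing with $s$, than the spread $\lambda$ we must produce --- together with the constraint that the counts in each of $A$ and $B$ sum to the same value $n$, which both caps how many symbols can be heavy and forces a light symbol to exist. Bookkeeping these two facts through the branches, with $\rho$ taken sufficiently small as a function of $s$, is the technical heart of the proof.
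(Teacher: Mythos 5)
Your reformulation of the conclusion as a ratio condition, and your extraction from each non-balanced string of a surplus symbol (count $> n/s + \rho n/(s-1)$) and a deficit symbol (count $< n/s - \rho n/(s-1)$), are both correct, and the first branch of your case analysis (the most frequent symbol of $A$ is light in $B$ and the most frequent symbol of $B$ is light in $A$) is handled properly. But the case you yourself identify as the technical heart --- some symbol is simultaneously heavy in $A$ and $B$ --- is not actually proved; it is only described as ``tracing which symbols can be light or heavy in $B$,'' with the assurance that each branch reduces to a checkable inequality. This is precisely where the coordination between $A$ and $B$ must happen, and the branching you describe is not obviously sufficient: to make a pair imbalanced for $A$ you essentially must pair $A$'s surplus (or deficit) symbol with a partner whose $A$-count sits far enough on the other side of $n/s$, but that partner may have an arbitrary, middling count in $B$, so nothing forces the same pair to be imbalanced for $B$. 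A light/heavy dichotomy at the single threshold $n/s$ does not by itself resolve this, and no concrete pair is exhibited in the remaining branches. As written, this is a genuine gap.

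For comparison, the paper resolves the coordination problem without any case analysis on which symbols are heavy in which string. Sorting $A$'s counts $\alpha_1 \le \dots \le \alpha_s$, non-balance gives $\alpha_s - \alpha_1 > \rho n$, so by telescoping some consecutive gap satisfies $\alpha_j - \alpha_{j-1} > (\rho/s)n$. One then considers the $s-1$ pairs $\set{\sigma_i, \sigma_{j-1}}$ for $i \ge j$ and $\set{\sigma_j, \sigma_i}$ for $i \le j-1$: every one of them straddles the gap, so every one is automatically imbalanced for $A$. If all of them were balanced for $B$, then any two symbols of $\Sigma$ would be linked by a chain of at most three of these pairs (passing through $\sigma_{j-1}$ and $\sigma_j$), so by the triangle inequality all of $B$'s counts would lie within $(3\rho/s)n$ of each other, contradicting $B$'s non-balance once $s \ge 3$. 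If you want to salvage your route you must write out the missing branches explicitly and verify each ratio inequality; otherwise, the ``many candidate pairs, all good for $A$, not all bad for $B$'' structure is the cleaner way to close the argument.
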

    
    \begin{proof}
        
        Let $\alpha_1 \le \dots \le \alpha_s$ be the number of times the distinct symbols $\sigma_1, \dots, \sigma_s$ of $\Sigma$ appear respectively in $A$,
        so that $\sigma_s$ is the most frequent symbol of $A$ and $\sigma_1$ is the least common character in $A$.
        By averaging we know that $\alpha_s \ge n/s \ge \alpha_1$.
        Since $A$ is not $\rho$-balanced we deduce that
            \[\alpha_s - \alpha_1 > \rho n.\]
            
    We can decompose the left hand side of the above equation to
            \[\alpha_s - \alpha_1 = \sum_{i=2}^{s} \grp{\alpha_i - \alpha_{i-1}}.\]
        Since all the summands on the right hand side are positive, there exists some index $j$ with
        \begin{equation}
        \label{eq:apart}
        \alpha_j - \alpha_{j-1} > (\rho/s)n.
        \end{equation}
        Note that we can find such a $j$ in linear time by scanning through $A$ and $B$ and keeping counts of all the characters that appear.
        Now, consider the $(s-1)$ two-element sets  
            \begin{equation}
            \label{eq:sets}
            \set{\sigma_s, \sigma_{j-1}}, \set{\sigma_{s-1}, \sigma_{j-1}}, \dots, \set{\sigma_j, \sigma_{j-1}},
              \set{\sigma_{j}, \sigma_{j-2}}, \dots, \set{\sigma_{j}, \sigma_{1}}.
            \end{equation}
            
        We claim that one of these sets satisfies the properties of $\Sigma'$ from the lemma statement.
        First, note that \Cref{eq:apart} ensures that the restriction of $A$ to any of the above sets is not $\rho/s$-balanced,
        so it suffices to verify that the restriction of $B$ will not be balanced for one of these sets.
        Suppose to the contrary that none of the sets from \Cref{eq:sets} satisfies the desired conditions.
        We will use a triangle-inequality argument on the character frequencies of $B$ to derive a contradiction.
        Let $M$ and $m$ be indices such that $\sigma_M$ is the most frequent character of $B$ and $\sigma_m$ is the least frequent.
        Since $B$ is not $\rho$-balanced, we know that $M\neq m$.
        If $M \ge j > m$ we may write
        
            \[\abs{\sigma_M(B) - \sigma_m(B)} \le \abs{\sigma_M(B) - \sigma_{j-1}(B)} + \abs{\sigma_{j-1}(B) - \sigma_j(B)}+  \abs{\sigma_{j}(B) - \sigma_m(B)}.\]
            
        By assumption, the restrictions of $B$ to the sets $\set{\sigma_M, \sigma_{j-1}}$, $\set{\sigma_j, \sigma_{j-1}}$, and $\set{\sigma_{j}, \sigma_m}$ are $(\rho/s)$-balanced.
        Thus, each addend on the right hand side of the above inequality is bounded above by $(\rho/s)n$.
        It follows that
            \[\abs{\sigma_M(B) - \sigma_m(B)} \le (3\rho/s)n.\]
        By similar reasoning, if $m = j$ and $M\ge j$ then we have
            \[\abs{\sigma_M(B) - \sigma_m(B)} = \abs{\sigma_M(B) - \sigma_j(B)} \le \abs{\sigma_M(B) - \sigma_{j-1}(B)} + \abs{\sigma_{j-1}(B) - \sigma_j(B)}  \le (2\rho/s)n.\]

        If instead $M, m > j$ we know that
            \[\abs{\sigma_M(B) - \sigma_m(B)} \le \abs{\sigma_M(B) - \sigma_{j-1}(B)} + \abs{\sigma_{j-1}(B) - \sigma_{m}(B)} \le (2\rho/s)n\]
        since now the subalphabets $\set{\sigma_M, \sigma_{j-1}}$ and $\set{\sigma_m, \sigma_{j-1}}$ both occur in \Cref{eq:sets}.
        Similar reasoning on the remaining cases of the values of $M$ and $m$ relative to $j$ to establishes the inequality
            \[\sigma_M(B) - \sigma_m(B) \le (3\rho/s)n.\]
        We have dropped absolute value signs because the left hand side of the above equation is positive by definition of $M$ and $m$.
        Since $s\ge 3$ this contradicts the assumption that $B$ is not $\rho$-balanced, and the desired result follows.
        Note that we can find which of subalphabets from \Cref{eq:sets} satisfies the conditions of the lemma in $O(n + s)$ time by scanning through $B$ to get the counts of each of its characters and trying out all the restrictions.
    \end{proof}
    
    In our proof of \Cref{thm:equal-length-intro}, we will require a stronger form of the result from Section 4 of \cite{approxED-LCS}.
This variant of their theorem is useful because it applies to strings of different length.

\begin{lemma}
\label{lm:imbalanced}
Let $X$ and $Y$ be binary strings of length $n$ and $m$ respectively, where $m\le n$.
Suppose the frequencies $0(Y)$ and $1(X)$ are both at most $(1/2 - \rho)m$ for some positive constant $\rho$.
Then there exist positive constants $\delta$ and $\epsilon$ such that if $0(Y)$ and $1(X)$ are within $\delta m$ of each other, we can compute a $\grp{1/2+\epsilon}$ approximation of $\lcs(X,Y)$ in subquadratic time.

\end{lemma}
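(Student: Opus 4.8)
The plan, following \cite{approxED-LCS} and keeping track of the unequal lengths, begins with an easy structural observation. Set $p = 0(Y)$ and $q = 1(X)$. Since $1(X) \le (1/2-\rho)m < m/2$, $0(Y) \le (1/2-\rho)m$ and $m \le n$, we get $0(X) \ge (1/2+\rho)m \ge p$ and $1(Y) \ge (1/2+\rho)m \ge q$, hence $\lcs(X,Y) \le \min(0(X),0(Y)) + \min(1(X),1(Y)) = p+q$, while the longest common \emph{unary} subsequence of $X$ and $Y$ has length $\max(p,q) \ge (p+q)/2$. The first step is a win--win: fix a small constant $\eta>0$ depending on $\rho$; if $\lcs(X,Y) \le (1-\eta)(p+q)$, then the unary subsequence of length $\max(p,q) \ge (p+q)/2 \ge \lcs(X,Y)/(2(1-\eta))$ is already a $(1/2+\eta/2)$-approximation, and we are done (taking $\epsilon \le \eta/2$). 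So from now on assume $\lcs(X,Y) > (1-\eta)(p+q)$; together with $\lcs(X,Y)\le m$ this forces $p+q < m/(1-\eta)$, and it forces any optimal alignment to leave unmatched at most an $O(\eta)$-fraction of the $p$ zeros of $Y$ and of the $q$ ones of $X$ --- i.e.\ the LCS essentially uses every rare symbol of both strings.

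To find a common subsequence of length $(1/2+\epsilon)\lcs(X,Y)$ in this regime, the idea is to pass through an approximate edit-distance computation, exploiting that $\lcs(X,Y)$ being close to $p+q$ makes the complementary quantity $\ed(X,Y) = |X|+|Y|-2\lcs(X,Y)$ (\Cref{lm:lcs-ed}) as small as the lengths permit. The difficulty is the possible gap between $n$ and $m$: when $n \gg m$, $\ed(X,Y) \ge n-m$ is inherently large, so a constant-factor edit-distance error dwarfs $\lcs(X,Y)$ and $\approxed$ applied directly to $X,Y$ yields nothing. I would get around this by first \emph{compressing} $X$: since any common subsequence of $X$ and $Y$ uses at most $p = 0(Y)$ of the zeros of $X$, truncating each maximal $0$-run of $X$ to length at most $p$ yields $X'$ with $\lcs(X',Y) = \lcs(X,Y)$ and only $O(p+q)=O(m)$ maximal runs. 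Then a case analysis: if $p$ and $q$ are small enough relative to $n$, a dynamic program over the run-length encodings of $X'$ and $Y$ computes $\lcs(X,Y)$ \emph{exactly} in time polynomial in the number of runs, i.e.\ $\widetilde O((p+q)^2)$, which is $o(n^2)$ in this case; otherwise one reduces to a comparable-length instance on which $\approxed$ --- which runs in truly subquadratic time --- returns, via \Cref{lm:lcs-ed}, a value exceeding $(1/2+\epsilon)\lcs(X,Y)$, and it is here that the hypothesis $|0(Y)-1(X)|\le\delta m$ enters, together with $\lcs(X,Y)>(1-\eta)(p+q)$, to guarantee that the relevant edit-distance error stays below $\epsilon\cdot\lcs(X,Y)$.

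I expect the main obstacle to be exactly this last case: adapting the clean equal-length argument of \cite{approxED-LCS} (which opposes the unary subsequence to a direct edit-distance approximation) to strings of very different length. When $n \gg m$ neither of those two bounds works as stated --- the unary bound stalls at $1/2$ precisely when $\lcs(X,Y)\approx p+q$, and the edit distance is too large for its approximation to be informative --- so one must both localize the LCS to a bounded-length portion of $X$ and run a case analysis that keeps the total running time truly subquadratic in $n$ \emph{uniformly} in the ratio $n/m$. Making all the resulting inequalities close simultaneously, by choosing $\eta$, $\delta$, and $\epsilon$ as sufficiently small constants depending on $\rho$, is the delicate part.
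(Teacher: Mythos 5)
Your opening win--win is correct: with $p=0(Y)$ and $q=1(X)$, the hypotheses do give $\lcs(X,Y)\le p+q$ and a unary subsequence of length $\max(p,q)\ge (p+q)/2$, so the only hard regime is $\lcs(X,Y)>(1-\eta)(p+q)$ with $p\approx q$. The run-length-encoding DP also legitimately disposes of the sub-case $pq\le n^{2-\delta'}$ (though note the run count of $X$ is already $O(q)$ before truncation --- truncating $0$-runs shortens runs, it does not reduce their number). The genuine gap is the remaining case, which is the actual content of the lemma, and your plan for it --- ``reduce to a comparable-length instance and apply $\approxed$'' --- does not work even in the best possible scenario. Suppose $n=m$ exactly and $1(X)=0(Y)=\alpha m$ with $\alpha\le 1/2-\rho$. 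Then $\lcs(X,Y)\le 2\alpha m\le(1-2\rho)m$, so by \Cref{lm:lcs-ed} the edit distance is at least $4\rho m$, i.e.\ $\Omega(m)$ \emph{unconditionally}. A $c$-approximation with $c\ge 3$ then returns, via \Cref{eq:ed-subtract}, at most $\lcs(X,Y)-(c-1)\bigl(m-\lcs(X,Y)\bigr)\le \lcs(X,Y)-2\cdot 2\rho m$, which for $\lcs(X,Y)\approx 2\alpha m$ is below $(1/2)\lcs(X,Y)$ unless $\alpha$ is very close to $1/2$ --- exactly the case your hypotheses exclude. So no choice of $\eta,\delta,\epsilon$ makes a single global call to $\approxed$ informative here; the hypothesis $|0(Y)-1(X)|\le\delta m$ cannot rescue this, since the obstruction is the $\Omega(\rho m)$ lower bound on the edit distance itself, not the length mismatch.

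What is missing is the localization that the paper (following Rubinstein--Song) uses: cut off left and right windows $L_X,R_X,L_Y,R_Y$ of the \emph{common} length $\alpha m=1(X)\approx 0(Y)$, and branch on the $0/1$ frequencies inside these windows. When a pair of opposite ends is imbalanced, combinations of $\match$ and $\greed$ already beat $\max(p,q)$ by $\Omega(\beta m)$; edit-distance approximation is invoked only when a pair of ends (say $R_X,R_Y$) is balanced, and then only on that equal-length balanced pair, where $\lcs(R_X,R_Y)$ is provably close to $\alpha m=|R_X|$ so that \Cref{lm:balanced-string} applies. Your proposal never produces such equal-length, \emph{balanced} substrings, and without them neither the unary bound (which stalls at $1/2$ when $\lcs\approx p+q$) nor $\approxed$ (which needs the LCS of the compared pieces to nearly exhaust their length) closes the hard case.
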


Although the realization that this type of result holds for strings of differing length is novel,
the proof of \Cref{lm:imbalanced} itself is conceptually identical to the frequency analysis used in \cite{approxED-LCS}, requiring only minor changes to make the argument go through.
For completeness we include the detailed casework proof of this result in \Cref{sec:not-an-appendix}.

Finally we apply a lemma that provides some frequency information about the strings.
    
    \begin{lemma}[Lemma 3.1 from \cite{approxED-LCS}]
        \label{lem:distant-freq}
        For any $\delta > 0$
        if
            \[\min\grp{0(X), 0(Y)} > (1+\delta)\min\grp{1(X), 1(Y)}\]
        or	
	        \[\min\grp{1(X), 1(Y)} > (1+\delta)\min\grp{0(X), 0(Y)}\]
	        
	    then there is a unary $(1+\delta)/(2+\delta)$ approximation for $\lcs(X,Y)$.
    \end{lemma}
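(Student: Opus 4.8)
The plan is to observe that the claimed $(1+\delta)/(2+\delta)$-approximation is nothing more than one of the two trivial unary common subsequences of $X$ and $Y$ — either the all-zeros string or the all-ones string of the appropriate length — and that the frequency hypothesis tells us exactly which of the two to return. Set $p = \min\grp{0(X),0(Y)}$ and $q = \min\grp{1(X),1(Y)}$. Since the two displayed hypotheses in the statement are interchanged by swapping the roles of the symbols $0$ and $1$, I would assume without loss of generality that we are in the first case, so that $p > (1+\delta)q$. The string consisting of $p$ copies of $0$ is a common subsequence of $X$ and $Y$ (each of $X$ and $Y$ contains at least $p$ zeros by definition of $p$), it can be produced in linear time after a single pass counting character frequencies, and it has length exactly $p$. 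So it suffices to show $p \ge \frac{1+\delta}{2+\delta}\,\lcs(X,Y)$.

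The one ingredient I would prove is the upper bound $\lcs(X,Y) \le p + q$. This is immediate: in any common subsequence $C$ of $X$ and $Y$, the number of zeros in $C$ is at most $0(X)$ and at most $0(Y)$, hence at most $p$, and likewise the number of ones in $C$ is at most $q$; adding these gives $|C| \le p + q$.

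After that the proof is a one-line computation. The hypothesis $p > (1+\delta)q$ rearranges to $p + q > (2+\delta)q$, which is exactly $\frac{p}{p+q} > \frac{1+\delta}{2+\delta}$. Combining this with $\lcs(X,Y) \le p+q$ gives
\[
p \;\ge\; \frac{p}{p+q}\cdot\lcs(X,Y) \;>\; \frac{1+\delta}{2+\delta}\cdot\lcs(X,Y),
\]
so the all-zeros common subsequence of length $p$ is a $(1+\delta)/(2+\delta)$-approximation (in fact strictly better). The second hypothesis is handled identically with $0$ and $1$ exchanged, returning the all-ones common subsequence of length $q$.

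I do not anticipate a real obstacle: the whole argument rests on the elementary bound $\lcs(X,Y) \le \min\grp{0(X),0(Y)} + \min\grp{1(X),1(Y)}$ together with a trivial rearrangement of the hypothesis, and the only point requiring even momentary care is that the relevant unary string genuinely embeds into both $X$ and $Y$, which follows at once from the definitions of $p$ and $q$.
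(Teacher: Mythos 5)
Your proposal is correct and matches the paper's argument essentially verbatim: both rest on the bound $\lcs(X,Y) \le \min\grp{0(X),0(Y)} + \min\grp{1(X),1(Y)}$, return the longer unary common subsequence, and finish by rearranging the frequency hypothesis. No further comment is needed.
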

        
        \begin{proof}
            Observe that for any binary strings $X$ and $Y$ we have
                \begin{equation}
                \label{eq:freq-bound}
                \lcs(X,Y) \le \min\grp{0(X), 0(Y)} + \min\grp{1(X), 1(Y)}.
                \end{equation}
            This equation holds because the LCS is a subsequence of $X$ and $Y$, and thus cannot contain more zeros or ones than either of the strings $X$ or $Y$ individually have.
            
            Suppose by symmetry that $\min\grp{0(X), 0(Y)}$ is the larger of the two addends on the right.
            Then the we can return the all zeros string of this length.
            By the first inequality we get
                \[\min\grp{0(X), 0(Y)} > (1+\delta)\grp{\lcs(X,Y) - \min\grp{0(X), 0(Y)}}\]
            and then rearranging proves the claim.
        \end{proof}
        
    We now combine these results to improve LCS approximation on all alphabets.

        \begin{proof}[Proof of \Cref{thm:equal-length}]
        
            Let $\rho$ and $\rho'$ be positive parameters whose values will be specified later.
            If either of the strings $A$ or $B$ are $\rho s$-balanced, we are done by \Cref{lm:balanced-string} (by taking $\rho$ to be small enough so that the lemma applies).
            If the LCS of $A$ and $B$ is not $\rho'$-balanced we are done by \Cref{lem:balanced-lcs}.
            So, we may assume that neither $A$ nor $B$ are $\rho s$-balanced and that their LCS is $\rho '$-balanced.
            
            By \Cref{lem:weak-restrict} we can find binary alphabet restrictions $X$ and $Y$ of $A$ and $B$ respectively with the property that neither $X$ nor $Y$ are $\rho$-balanced.
            Informally, since the LCS of $A$ and $B$ is balanced, 
            a better than $1/2$ approximation for the LCS of $X$ and $Y$ acts as a better than $1/s$ approximation for $\lcs(A,B)$.
            More precisely, since $\lcs(A,B)$ is $\rho'$-balanced, each of its characters occurs at least $(1/s - \rho')\lcs(A,B)$ times in the LCS.
            By construction, we also know that $\lcs(X,Y)$ is at least as large as a binary alphabet restriction of $\lcs(A,B)$.
            Thus, it follows that given a positive constant $\epsilon'$, a $1/2 + \epsilon'$ approximation of $\lcs(X,Y)$ has length at least
            
                \[\grp{1/2 + \epsilon'} \cdot 2\grp{1/s - \rho'}\lcs(A,B) = \grp{1/s + 2\epsilon'/s - \rho'  - 2\epsilon' \rho' }\lcs(A,B).\]
                
            By setting $\epsilon = \epsilon'/s$ (for example) and $\rho'$ sufficiently small in terms of $\epsilon'$, the above calculation shows that a $1/2 + \epsilon'$ approximation for the LCS of $X$ and $Y$ acts as a $1/s + \epsilon$ approximation for $\lcs(A,B)$.
            Hence to complete the proof, it suffices to get a better than $1/2$ approximation for $\lcs(X,Y)$.
            We may assume that
                \begin{equation}
                \label{eq:0-min}
                \min\grp{0(X), 0(Y)} \le (1+\delta)\min\grp{1(X), 1(Y)}
                \end{equation}
            and
                \begin{equation}
                \label{eq:1-min}
                \min\grp{1(X), 1(Y)}\le (1+\delta)\min\grp{0(X), 0(Y)}
                \end{equation}
            for some constant $\delta$ since
            otherwise \Cref{lem:distant-freq} yields a better than $1/2$ approximation.
            
            As mentioned previously, \cite{approxED-LCS} gives a better than $1/2$ approximation for the LCS when $X$ and $Y$ have equal length.
            If $X$ and $Y$ have different lengths,
            without loss of generality we assume that $|X| > |Y|$ 
            and $0(Y)\le 1(Y)$.
            We do casework on frequencies in $X$,
            relative to the frequencies in $Y$.
            
Since $X$ is longer than $Y$,
we cannot have both $1(X)\le 1(Y)$ and $0(X)\le 0(Y)$.
            
            If $1(X) > 1(Y)$ and $0(X) > 0(Y)$ simultaneously, then \cref{eq:1-min} implies that
                \[1(Y)\le (1+\delta)\cdot 0(Y)\]
            which contradicts the fact that $Y$ is not $\rho$-balanced as long as we take $\delta \le 2\rho$.
            
            If  instead $1(X)\le 1(Y)$ and $0(X)\le 0(Y)$, by \cref{eq:1-min} we similarly have
                \[0(X)\le 0(Y) \le 1(Y)\le (1+\delta)\cdot 0(X)\]
            which again contradicts the fact that $Y$ is not $\rho$-balanced for the same choice of $\delta$.
            
    Thus, the only possibility is that $1(X)\le 1(Y)$ and $0(X) > 0(Y)$.
    Let $m=|Y|$ be the length of string $Y$.
    Then \cref{eq:0-min} implies that 
        \[0(Y)\le (1+\delta)\cdot 1(X) \le 1(X) + \delta m\]
    while \cref{eq:1-min} implies that
        \[1(X)\le (1+\delta)\cdot 0(Y) \le 0(Y) + \delta m.\]
    
    Consequently, $0(Y)$ and $1(X)$ are within $\delta m$ of each other. 
    Then by \Cref{lm:imbalanced}, as long as we take $\delta$ small enough in terms of $\rho$ we get a subquadratic $1/2 + \epsilon'$ approximation for $\lcs(X,Y)$.
            As noted earlier, this then yields the desired $1/s+\eps$ approximation for $\lcs(A,B)$ in truly subquadratic time.
    In fact, because we only ever use subroutines that run in linear time or constant factor approximations to edit distance which take near-linear time, the overall algorithm takes near-linear time.
        \end{proof}

\section{Proof of \Cref{lm:imbalanced}}
    \label{sec:not-an-appendix}
    This section is devoted to proving \Cref{lm:imbalanced}.
    We do this working through the individual arguments in the case analysis of \cite{approxED-LCS} and verifying that the arguments still hold in the case where the strings have different lengths, as long as they satisfy the frequency requirements included in the hypotheses of the lemma.
    Throughout this section we fix the binary alphabet $\Sigma = \set{0,1}$ and assume that all strings come from this alphabet.
    
    We carry over the following subroutines from \cite{approxED-LCS}.
    
    \begin{definition}
    \label{def:match}
    Given strings $A$ and $B$ and a symbol $\sigma$, the algorithm $\match(A,B,\sigma)$ returns the largest subsequence of $A$ and $B$ consisting entirely of copies of $\sigma$.
    \end{definition} 
    
    \begin{definition}
    \label{def:bm}
    Given strings $A$ and $B$, the algorithm $\bm(A,B)$ returns the longer of the strings
        \[\match(A,B,0) \quad \text{and} \quad \match(A,B,1).\]
        
    In other words, the algorithm returns the largest common unary subsequence.
    \end{definition} 
    
    Note that $\bm$ is a $1/2$ approximation algorithm for LCS.
    
    \begin{definition}
    \label{def:greed}
    Given strings $A_1, A_2$, and $B$, the algorithm $\greed(A_1, A_2, B)$ returns 
        \[\max_{B = B_1\sqcup B_2} \bm(A_1,B_1) + \bm(A_2,B_2)\]
    taken over all possible splits of $B$ into two contiguous left and right substrings $B_1$ and $B_2$.
    \end{definition}
    
    These algorithms can all be implemented to run in linear time by scanning through the counting how many $0$s and $1$s appear in each string.
    The final procedure we will invoke utilizes an approximation algorithm for computing the edit distance of two strings as a black-box.
    Although any fast enough constant factor approximation for edit distance will work, for concreteness we assume that it leverages the algorithm $\widetilde{\ed}$ from \cite{simpler-ED}, which runs in subquadratic time and can approximate the edit distance to a $c = 3 + \epsilon'$ factor for any fixed positive constant $\epsilon'$.
    Note that this algorithm can also return a common subsequence achieving the given length.
    
    \begin{proof}[Proof of \Cref{lm:imbalanced}]
    
    Let $1(X) = \alpha m$ for some $\alpha\in [0,1]$.
    By assumption, we know that $\alpha < 1/2$ is bounded away from $1/2$ by some constant amount.
    We can also assume that $0(Y)$ is within $\delta m$ of $1(X)$ for some positive parameter $\delta < 0$ to be picked later on.
    We will end up setting $\delta$ to be some sufficiently small constant depending on $\alpha$.
    We will use the notation $\approx$ to denote quantities that are within $\delta m$ of each other.
    For example $1(X)\approx 0(Y)$.
    This lets us avoid having to stick in $\pm \delta m$ symbols in all the inequalities and helps make the arguments cleaner without affecting the correctness (since we just care about getting a $1/2 + \epsilon$ approximation for \emph{some} constant $\epsilon > 0$).
    
    Note that by \Cref{eq:freq-bound} the LCS of the input strings
    
        \begin{equation}
        \label{eq:triv-bound}
        \lcs(X,Y) \le (2\alpha + \delta)m
        \end{equation}
    cannot be too large, because the LCS contain at most $\alpha m$ $1$s from $X$ and $\approx \alpha m$ $0$s from $Y$.

    Let $R_X$ and $R_Y$ denote the substrings of $X$ and $Y$ consisting of their rightmost $\alpha m$ characters respectively.
    Similarly define $L_X$ and $L_Y$ as the substrings of $X$ and $Y$ consisting of the leftmost $\alpha m$ characters.
    Set $M_X = X\setminus (L_X\cup R_X)$ 
    and $M_Y = Y\setminus (L_Y\cup R_Y)$
    to be the middle substrings of $X$ and $Y$ that remain when these left and right ends are chopped off.
    
    We now follow the casework of \cite{approxED-LCS}, explaining at each step why the arguments still hold in our more general setting.
    These cases are based off the frequencies of $0$s and $1$s in the left and right ends of the inputs, and consider separately the situation where these substrings are pseudorandom (balanced) or structured (imbalanced).
    In the former case we can appeal to edit distance as in the proof of \Cref{lm:balanced-string}, and in the latter situation we can exploit the imbalance in the strings to use the simpler unary algorithms described in definitions \ref{def:match}, \ref{def:bm}, and \ref{def:greed}.
    Intuitively, we succeed in using edit distance approximation arguments (and overcome the barrier described the end of \Cref{sec:input-length}) in this particular case because even though $X$ and $Y$ may have different size, we identify right and left substrings which all have equal length and  only employ edit distance approximation around these areas.
    
    Recall that we assumed that $\alpha < 1/2 - \rho$ for some constant $\rho$.
    Take a parameter $\beta < \rho/20$ to represent deviation from the balanced case.
    By choosing $\delta$ sufficiently small in terms of $\beta$ we may additionally assume that
    
        \begin{equation}
            \label{eq:imbalanced}
            1(X), 0(Y) < \grp{1/2 - 10\beta}m
        \end{equation}
    since in the hypothesis of the lemma we supposed that $1(X) < (1/2 - \rho)m$ and that $0(Y) \approx 1(X)$.
    Finally, as one last piece of notation, we write $|A|$ to denote length of an arbitrary string $A$.
    We now begin the casework, maintaining consistency with \cite{approxED-LCS}.
            
    \paragraph*{Case 1(a): $1(R_Y), 0(R_X) \in \left[\grp{\alpha/2 - 4\beta}m, \grp{\alpha/2 + 4\beta}m\right]$}
            In this case both right ends of the strings are balanced.
            We will give a good approximation for the LCS by splitting these strings at the right ends and using the aforementioned algorithms.
            
        Consider an optimal alignment between $X$ and $Y$ (i.e. a maximum partial matching of identical characters in $X$ and $Y$, corresponding to the LCS of $X$ and $Y$).
        Let $\widehat{R_Y}$ be the minimal suffix (from the right end) of the string $Y$ with the property that every character from $R_X$ which is matched in the alignment is paired up with some character in $\widehat{R_Y}$.
        
        Without loss of generality, we may assume that $\widehat{R_Y}$ is a substring of $R_Y$.
        This is because if $\widehat{R_Y}$ was not contained in $R_Y$, we could define an analogous substring of $\widehat{R_X}$ of $X$ satisfying $\widehat{R_X}\sub R_X$ and then use a symmetric argument to get the desired approximation.
        Let $\widehat{L_Y} = Y\setminus \widehat{R_Y}$
        be the left substring of $Y$ that remains after chopping of $\widehat{R_Y}$.
        
        Optimality of the alignment implies that
        
            \begin{equation}
            \label{eq:1bi-split}
            \lcs(X,Y) = \lcs(X\setminus R_X, \widehat{L_Y}) + \lcs(R_X, \widehat{R_Y}).
            \end{equation}
            
        Define the quantities
            \[f_L = \min(1(X\setminus R_X), 1(\widehat{L_Y})) + 
            \min(0(X\setminus R_X), 0(\widehat{L_Y}))\]
        and 
            \[f_R = \min(1(R_X), 1(\widehat{R_Y})) + \min(0(R_X), 0(\widehat{R_Y}))\]
            
        which represent frequency-based upper bounds for the LCS terms from the right hand side of \Cref{eq:1bi-split}.
        They are useful because \Cref{eq:freq-bound} together with \Cref{eq:1bi-split} implies that
        
            \begin{equation}
            \label{eq:abbrv-bound}
            \lcs(X,Y) \le f_L + f_R.
            \end{equation}
        
        We also introduce the quantity
        
            \[Z = \max\grp{\min(1(X\setminus R_X), 1(\widehat{L_Y})), \min(0(X\setminus R_X), 0(\widehat{L_Y}))}\]
            
        which is the larger of the two addends defining $f_L$, and equal to the length of the string returned by
        
        \[\bm(X\setminus R_X, \widehat{L_Y}).\]
        
        We further subdivide into cases based off the size of $Z$.
                
        \textbf{Case 1(a)(i): $Z > \grp{\alpha/2 + 10\beta}m$}
        
        When $Z$ is large we can combine two unary subsequences to get a good enough LCS approximation.
        We first show that $Z$ is bigger than $f_L/2$ by a constant fraction of $m$.
        
        By definition we have
    
            \[f_L - Z = \min\grp{\min(1(X\setminus R_X), 1(\widehat{L_Y})), \min(0(X\setminus R_X), 0(\widehat{L_Y}))} \le 1(X\setminus R_X).\]
            
        Since $X$ has $\alpha m$ ones and $R_X$ has length $\alpha m$ we get that
        
            \[1(X\setminus R_X) = \alpha m - 1(R_X) = \alpha m - \grp{\alpha m - 0(R_X)} = 0(R_X).\]
            
        Then using the case assumptions we have
        
            \[0(R_X) \le \grp{\alpha/2 + 4\beta}m < Z - 6\beta m. \]
            
        Chaining these inequalities together and rearranging we deduce that
        
            \[Z > f_L/2 + 3\beta m.\]
            
        Now if we make a single call to the $\greed$ routine we obtain a string of length
        
            \begin{equation}
            \label{eq:greed}
            \greed(X\setminus R_X, R_X, Y) \ge 
            \bm(X\setminus R_X, \widehat{L_Y}) + \bm(R_X, \widehat{R_Y}).
            \end{equation}
            
        From our earlier discussion we have 
        
        \[\bm(X\setminus R_X, \widehat{L_Y}) \ge Z > f_L/2 + 3\beta m.\]
        
        Moreover
        
        \[\bm(R_X, \widehat{R_Y}) = \max( \min(1(R_X), 1(\widehat{R_Y})), \min(0(R_X), 0(\widehat{R_Y}))) \ge f_R/2\]
        
        since we are taking the maximum over two addends that sum to $f_R$. 
        
        Finally, if we substitute the above inequalities into \Cref{eq:greed} and apply \Cref{eq:abbrv-bound} we get that
        
            \[\greed(X\setminus R_X, R_X, Y) \ge \grp{f_L + f_R}/2 + 3\beta m \ge \lcs(X,Y)/2 + 3\beta m \ge \grp{1/2 + 3\beta}\lcs(X,Y).\]
            
        Thus running $\greed$ gives us a better than $1/2$ approximation in this case.
        
        \textbf{Case 1(a)(ii) : $Z \le \grp{\alpha/2 + 10\beta}m$}
        
Intuitively, in this case $Z$ is too small for us to ensure a good approximation using frequency guarantees alone.
However, because $Z$ is so small, the LCS also cannot be too large.
Because of this, we will be able to get a good approximation by combining a common subsequence of the (balanced) right ends of the strings with a common subsequence of the strings with the right ends removed.

More concretely, we leverage the $\approxed$ algorithm from \Cref{def:approxed}.
        From our previous observation we have
        $f_L\le 2Z \le \grp{\alpha + 20\beta}m.$
        So via \Cref{eq:1bi-split} we can bound the LCS by
            \begin{equation}
            \label{eq:lcs-upper}
            \lcs(X,Y) \le f_L + \lcs(R_X, \widehat{R_Y}) \le \grp{\alpha + 20\beta}m + \lcs(R_X, R_Y)
            \end{equation}
        where in the last step we also used the fact that $\widehat{R_Y}\subseteq R_Y$.
        
        We will get an approximation by returning a unary subsequence from the left parts of the strings, and using edit distance approximation on the right ends.
        First, we can make a call to $\bm$ and get a common subsequence of length at least
            \[\bm(X\setminus R_X, Y\setminus R_Y) \ge \match(X\setminus R_X, Y\setminus R_Y, 0) \ge \grp{\alpha/2 - 4\beta}m.\]
        This last inequality above follows from combining the case 1 assumptions about the frequencies of characters in $0(R_X)$ and $1(R_Y)$,
        together with the facts that $0(Y)\approx 1(X) = \alpha m$ and $|R_X|=|R_Y| = \alpha m$.
        
        Now, since  $R_X$ and $R_Y$ are $(4\beta/\alpha)$-balanced we can apply \Cref{lm:balanced-string} with $n=\alpha m$ as long as we take $\beta$ sufficiently small in terms of $\alpha$.
        Note that for fixed alphabet size and $\rho$, the parameter $\beta$ remains $\Omega(1)$.
        This ensures that in subquadratic time we can compute a common subsequence of $R_X$ and $R_Y$ with length at least
        
            \[\grp{1/2 + \gamma} \lcs(R_X, R_Y) \]
        where $\gamma < 1$ is the constant from \Cref{lm:balanced-string}, which is larger for smaller values of $\beta$.
        By the frequency assumptions on $R_X$ and $R_Y$, we know that
        
            \[\lcs(R_X, R_Y) \ge \grp{\alpha/2 - 4\beta}m\]
        so in fact the subsequence of $R_X$ and $R_Y$ returned by \Cref{lm:balanced-string} has length at least
        
            \[\grp{1/2 + \gamma} \lcs(R_X, R_Y)  \ge \lcs(R_X,R_Y)/2 + \gamma(\alpha/2 - 4\beta)m.\]
            
        Now if we combine these two subsequences together, we get a common subsequence of $X$ and $Y$ of length at least
            \[\lcs(R_X, R_Y)/2 + \grp{\alpha/2 - 4\beta}m + \gamma(\alpha/2 - 4\beta)m\]
        which can be written in the form
            \[\grp{(\alpha + 20\beta)m + \lcs(R_X, R_Y)}/2 + 
            \grp{\gamma\alpha/2 - 14\beta - 4\gamma\beta}m.\]
            
        By applying \Cref{eq:lcs-upper} and the fact that $\gamma < 1$ we see that this is at least
        
            \[\lcs(X,Y)/2 + \grp{\gamma\alpha/2 - 18\beta }m.\]
            
        Finally, by picking $\beta$ small enough this expression is at least
        
        \[\lcs(X,Y)/2 + \grp{\gamma\alpha/3}m \ge \grp{1/2 + \gamma/6}\lcs(X,Y)\]
        where in the last step we have used \Cref{eq:freq-bound} together with $1(X) = \alpha m$ and $0(Y) \approx \alpha m$.
        This proves that we can attain a better than $1/2$ approximation for the LCS as claimed.
    
    \paragraph*{Case 1(b): $1(R_Y) < \grp{\alpha/2 - 4\beta}m$ and $0(R_X) \le \grp{\alpha/2 + 2\beta}m$}
    
        In this case the right ends of $X$ and $Y$ each do not contain too much their respective strings' most common characters.
        We show that this implies the LCS of both strings must be so small that simply returning a unary string yields a better than $1/2$ approximation.
        
        As in case 1(a), consider an optimal alignment between $X$ and $Y$.
        Now define the substring $\widehat{R_Y}$ to be the minimal suffix of $Y$ which contains all characters of $Y$ that $R_X$ is aligned to.
        Let $\widehat{L_Y} = Y\setminus \widehat{R_Y}$ be what remains of $Y$ after $R_Y$ is removed.
        Since the alignment is optimal, \Cref{eq:1bi-split} holds.
            
We now subdivide into further case based off how the right ends of strings are aligned.

        \textbf{Case 1(b)(i): every character of $R_X$ is matched to some character of $R_Y$ in the alignment.}
        
        In this case $\widehat{R_Y}$ is a substring of $R_Y$.
        From \Cref{eq:freq-bound} and the case assumptions we get that
        
            \[\lcs(X\setminus R_X, \widehat{L_Y})\le 1(X\setminus R_X) + 0(\widehat{L_Y}) \le \grp{\alpha/2 + 2\beta}m + 0(\widehat{L_Y})\]
            
        and
        
            \[\lcs(R_X, \widehat{R_Y}) \le 1(\widehat{R_Y}) + 0(\widehat{R_Y}) \le \grp{\alpha/2 - 4\beta}m + 0(\widehat{R_Y}).\]
            
        Note that in this second inequality we are using the inequality $1(\widehat{R_Y})\le 1(R_Y)$ which follows from the earlier observation that $\widehat{R_Y} \sub R_Y$.
        By adding these inequalities together and substituting the result into \Cref{eq:1bi-split} we deduce that
        
            \[\lcs(X,Y)\le \grp{\alpha - 2\beta}m + \grp{0(\widehat{L_Y})+ 0(\widehat{R_Y})} \le \grp{\alpha - 2\beta}m + \alpha m = \grp{2\alpha - 2\beta}m\]
            
        where we have used the fact that $0$ occurs in $Y$  $\approx \alpha m$ times.
        
        \textbf{Case 1(b)(ii): some character of $R_X$ is matched outside $R_Y$ in the alignment.}
        
        In this case  $R_Y$ is a substring of $\widehat{R_Y}$.
            
        Using \Cref{eq:freq-bound} again we find that
        
            \begin{equation}
            \label{eq:first-term}
            \lcs(X\setminus R_X, \widehat{L_Y}) \le 1(X\setminus R_X) + 0(\widehat{L_Y})
            \end{equation}
            
        From $0(R_X)\le \grp{\alpha/2 + 2\beta}m$ we know that $1(R_X)\ge \grp{\alpha/2 - 2\beta}m$
        since $R_X$ has length $\alpha m$.
        It follows that
        
            \[1(X\setminus R_X) = 1(X) - 1(R_X) \le \grp{\alpha/2 + 2\beta}m\]
            
        since $1(X) = \alpha m$.
        
        Similarly, since $1(R_Y) < \grp{\alpha/2 - 4\beta}m$ and $R_Y$ has length $\alpha m$ we know that $0(R_Y) > \grp{\alpha/2 + 4\beta}m.$
        Since $R_Y\sub \widehat{R_Y}$ it must be the case that $0(\widehat{R_Y}) > \grp{\alpha/2+4\beta}m$
        which means that
        
            \[0(\widehat{L_Y}) = 0(Y) - 0(\widehat{R_Y}) \approx \alpha m - 0(\widehat{R_Y}) < \grp{\alpha/2 - 4\beta}m.\]
            
        Adding these two inequalities and substituting into \Cref{eq:first-term} proves that
        
            \[\lcs(X\setminus R_X, \widehat{L_Y}) \le \grp{\alpha - 2\beta}m.\]
            
        Then applying \Cref{eq:1bi-split} and using the fact that $R_X$ has length $\alpha m$ proves that
        
            \[\lcs(X,Y) = \lcs(X\setminus R_X, \widehat{L_Y}) + \lcs(R_X, \widehat{R_Y}) \le (\alpha - 2\beta)m + \alpha m = \grp{2\alpha - 2\beta}m.\]
                
        Thus in both this subcase and the previous one, the LCS is at most $\grp{2\alpha - 2\beta}m$.
        Hence, returning the string of $\approx \alpha m$ zeros obtained by calling $\match(X, Y, 0)$ gives a better than $1/2$ approximation as desired.
    
    \paragraph*{Case 1(c): $1(R_Y) \le \grp{\alpha/2 + 2\beta}m$ and $0(R_X) < \grp{\alpha/2 - 4\beta}m$}
    
    This case is symmetric to case 1(b) and similar reasoning handles it.
            
    \paragraph*{Case 2: $1(L_Y), 0(L_X)\le \grp{\alpha/2 + 2\beta}m$}
    
    Combining cases 1(a), 1(b), and 1(c) resolves the situation where 
    
    \[1(R_Y), 0(R_X)\le \alpha/2 + 2\beta.\]
    
    Consequently, case 2 is symmetric to case 1.
    In particular, we can flip the strings (by replacing $X$ with the string $X'$ which consists of the symbols of $X$ read in reverse from right to left, and replacing $Y$ with its analogous reverse string $Y'$) and apply the arguments from case 1 to handle this case.
    
    \paragraph*{Case 3: $1(L_Y), 1(R_Y)\le \grp{\alpha/2 + \beta}m$ and $0(L_X), 0(R_X) > \grp{\alpha/2+2\beta}m$}
    
    In this case both ends of the inputs have many $0$s,
    so repeated calls to $\match$ will be enough to guarantee a better than $1/2$ approximation.
    
    Since $L_Y$ has length $\alpha m$ and at most $\grp{\alpha/2+\beta}m$ instances of $1$, it must have at least
        \[0(L_Y) \ge \alpha m - (\alpha/2 + \beta)m = (\alpha/2 - \beta)m\]
    
    occurrences of $0$.
    Since $1(R_Y)\le \grp{\alpha/2 + \beta}m$ the same reasoning shows that
    \[0(R_Y) \ge \grp{\alpha/2 - \beta}m.\]
    Consequently we can get common subsequences of the left and right ends of the strings consisting entirely of $0$s with lengths
    
        \[\match(L_X, L_Y, 0) = \min(0(L_X), 0(L_Y)) \ge \grp{\alpha/2 - \beta}m\]
       
    and
            \[\match(R_X, R_Y, 0) = \min(0(R_X), 0(R_Y)) \ge \grp{\alpha/2 - \beta}m.\]
            
    Since the ends have many $0$s, we expect the middle substrings to have many $1$s.
    Indeed, since the left end of $X$ 
    
        \[1(L_X) = |L_X| - 0(L_X) < \alpha m - (\alpha/2 + 2\beta )m = \grp{\alpha/2 - 2\beta}m\]
        
    does not have many $1$s and similar reasoning shows that
        \[1(R_X) < \grp{\alpha/2 - 2\beta}m\]
    this holds for the right end as well, the middle of $X$ has only
        \[1(M_X) = 1(X) - 1(L_X) - 1(R_X) > \alpha m - 2(\alpha/2 - 2\beta)m = 4\beta m\]
    appearances of $1$.
    
    To argue that $M_Y$ has enough $1$s we will need to appeal to the fact that $Y$ is not balanced.
    From \Cref{eq:imbalanced} we know that
    
        \[1(Y) = m - 0(Y) > m - (1/2 - 10\beta)m = \grp{1/2 + 10\beta}m.\]
        
    We can use the case assumptions together $\alpha < 1/2$ to then deduce
    
        \[1(M_Y) = 1(Y) - 1(L_Y) - 1(R_Y) > (\alpha + 10\beta)m - 2(\alpha/2 + \beta)m = 8\beta m.\]
    Consequently the largest all $1$s subsequence between the middle substrings has length at least
    
        \[\match(M_X, M_Y, 1) = \min(1(M_X), 1(M_Y)) \ge 4\beta m.\]
        
    Thus, by combining these three subsequences from three calls to match we can recover in linear time a common subsequence of size
    
        \[\match(L_X, L_Y, 0) + \match(M_X, M_Y, 1) + \match(R_X, R_Y, 0) > 2(\alpha/2 - \beta)m + 4\beta m = (\alpha + 2\beta)m.\]
        
    By \Cref{eq:triv-bound} the true LCS has length at most $\approx 2\alpha m$, so a string of length $\grp{\alpha + 2\beta}m$ is a $1/2 + \epsilon$ approximation for some constant $\epsilon < 2\beta/\alpha$ as desired.

    \paragraph*{Case 4: $0(L_X), 0(R_X)\le \grp{\alpha/2 + \beta}m$ and $1(L_Y), 1(R_Y) > \grp{\alpha/2 + 2\beta}m$}
    
    This case is symmetric to case 3 and similar reasoning handles it.

    \paragraph*{Case 5: $0(L_X), 1(R_Y) > \grp{\alpha/2 + \beta}m$}
    In this case, the ends of the strings have unusually large instances of either $0$ or $1$.
    This will enable us to combine two calls to $\match$ to get the desired approximation.
        
    We can check that the right end of $Y$ does not have many $0$s
    
    \[0(R_Y) = |R_Y| - 1(R_Y) < \alpha m - (\alpha/2 + \beta)m = \grp{\alpha/2-\beta}m.\]
    
    It follows that the remainder of the string $Y\setminus R_Y = L_Y\cup M_Y$ has many zeros
    
                \[0(Y\setminus R_Y) = 0(Y) - 0(R_Y) \approx \alpha m - 0(R_Y) > \grp{\alpha/2 + \beta}m.\]
                
            Similar reasoning shows that
            
            \[1(L_X) = \alpha m - 0(L_X) < \grp{\alpha/2-\beta}m\]
            
            so that the remaining portion of $X$ has many ones
            
                \[1(X\setminus L_X) = 1(X) - 1(L_X) = \alpha m - 1(L_X) > \grp{\alpha/2+\beta}m.\]
                
            It follows that 
                \[\match(L_X, Y\setminus R_Y, 0) + \match(X\setminus L_X, R_Y, 1) > \grp{\alpha + 2\beta}m\]

            yields a better than $1/2$ approximation since by \Cref{eq:triv-bound} the LCS is at most $\approx 2\alpha m$.
            
        \paragraph*{Case 6: $1(L_Y), 0(R_X) > \grp{\alpha/2 + \beta}m$}
        
        This is symmetric to case 5 and a similar argument proves the result holds in this situation.

        By inspection or by referring to Table 1 of \cite{approxED-LCS}, we can verify that these cases handle all possible input strings satisfying the conditions of the lemma.
        Since in every case we obtain a better than $1/2$ approximation for the LCS in subquadratic time, we have proven the claim.
        
    \end{proof}

\section{Open Problems}
\Cref{thm:equal-length-intro} of our work shows how to obtain better than $1/|\Sigma|$ approximations for the longest common subsequence of \emph{equal-length strings} over an alphabet $\Sigma$.
It remains an open problem to get such approximations in the setting where the input strings have different length (and by \Cref{corr:bin-reduction}, to solve this problem it suffices to obtain an improvement in the setting of binary alphabets, where $|\Sigma| = 2$).

Additionally, although our algorithm beats the longstanding trivial approximation ratio for LCS, it does so only by a modest amount.
It would be interesting to get better approximation ratios, both in terms of their concrete value for small $|\Sigma|$ and in terms of their growth as a function of the alphabet size.

%%
%% Bibliography
%%

\bibliography{lcs}

\end{document}